\documentclass[11pt,letterpaper]{article}
\usepackage[margin=1in]{geometry}
\usepackage{amsmath,amssymb}
\usepackage{changepage}
\usepackage[utf8x]{inputenc}
\usepackage{textcomp,marvosym}
\usepackage{cite}
\usepackage{nameref,hyperref}
\usepackage[right]{lineno}

\usepackage{microtype}
\DisableLigatures[f]{encoding = *, family = * }
\usepackage[table]{xcolor}
\usepackage{array}
\newcolumntype{+}{!{\vrule width 2pt}}

\newlength\savedwidth

\usepackage[aboveskip=1pt,labelfont=bf,labelsep=period,justification=raggedright,singlelinecheck=off]{caption}

\bibliographystyle{plos2015}

\makeatletter
\renewcommand{\@biblabel}[1]{\quad#1.}
\makeatother

\usepackage{lastpage,fancyhdr,graphicx}
\usepackage{epstopdf}

\pagestyle{fancy}
\fancyhf{}

\rfoot{\thepage/\pageref{LastPage}}

\usepackage{algorithm}
\usepackage{algpseudocode}
\usepackage{graphicx} 

\usepackage{amsmath,amsfonts,bm}
\usepackage{bbm}

\newcommand*{\rom}[1]{%
\textup{\uppercase\expandafter{\romannumeral#1}}%
}

















\def\1{\bm{1}}










\DeclareMathAlphabet{\mathsfit}{\encodingdefault}{\sfdefault}{m}{sl}
\SetMathAlphabet{\mathsfit}{bold}{\encodingdefault}{\sfdefault}{bx}{n}













\DeclareMathOperator*{\argmin}{argmin}

\DeclareMathOperator{\Tr}{Tr}

\newcommand{\R}{\mathbb{R}}

\usepackage{amsthm}
\newtheorem{theorem}{Theorem}[section]

\newcommand{\norm}[1]{\left\lVert#1\right\rVert}
\usepackage{comment}
\usepackage{xcolor}
\usepackage{textcomp}

\newcommand{\TP}{\mathrm{TP}}
\newcommand{\FP}{\mathrm{FP}}
\newcommand{\FN}{\mathrm{FN}}
\newcommand{\precision}{\mathrm{Precision}}
\newcommand{\recall}{\mathrm{Recall}}

\usepackage{textcomp} 
\usepackage{authblk}

\begin{document}

\vspace*{0.2in}

{\huge
\textbf\newline{Efficient inference of dynamic gene regulatory networks using discrete penalty} 
}

\begin{flushleft}

Visweswaran Ravikumar\textsuperscript{1}\textsuperscript{\textdagger},
Aaresh Bhathena\textsuperscript{2\textsuperscript{\textdagger}},
Wajd N Al-Holou\textsuperscript{3},
Salar Fattahi\textsuperscript{2},
Arvind Rao\textsuperscript{1,4,5}
\\
\bigskip
\textsuperscript{1} Department of Bioinformatics and Computational Medicine, University of Michigan, Ann Arbor, MI, USA
\\
\textsuperscript{2} Department of Industrial and Operations Engineering, University of Michigan, Ann Arbor, MI, USA
\\
\textsuperscript{3} Department of Neurosurgery, University of Michigan, Ann Arbor, MI, USA
\\
\textsuperscript{4} Department of Biostatistics, University of Michigan, Ann Arbor, MI, USA
\\
\textsuperscript{5} Department of Radiation Oncology, University of Michigan, Ann Arbor, MI, USA
\\
\bigskip

$\dagger$ These authors contributed equally to this work.

\end{flushleft}

\section*{Abstract}
Gene regulatory networks (GRNs) orchestrate cellular decision making and survival strategies. Inferring the structure of these networks from high-dimensional transcriptomics data is a central challenge in systems biology. Traditional approaches to GRN inference, such as the graphical lasso and its joint extensions, rely on $\ell_1$ penalty to induce sparsity but can bias network recovery and require extensive hyperparameter tuning. Here, we present a scalable framework for the joint inference of dynamic GRNs using a discrete $\ell_0$ penalty, enabling direct and unbiased control over network sparsity. Leveraging recent algorithmic advances, we efficiently solve the resulting mixed-integer optimization problem for populations structured as arbitrary tree hypergraphs, accommodating both continuous and categorical distinctions among biological samples. After validating our method on synthetic benchmarks, we apply it to single-cell and spatial transcriptomics data from glioblastoma (GBM) patient tumors. Our approach reconstructs gene networks across tumor clusters, maps network rewiring along hypoxia gradients, and reveals niche-specific differences between primary and recurrent tumors. By providing a robust and interpretable tool for GRN inference in complex tissues, our work facilitates high-resolution dissection of tumor heterogeneity and adaptation, with broad applicability to emerging large-scale transcriptomic datasets.

\section*{Author summary}  

Efficient inference of GRNs from gene-expression measurements is a cornerstone problem in systems biology. Reconstructing the regulatory logic of cells from gene-expression data allows us to identify central drivers of cellular adaptation to specific environmental conditions or along disease stages. This is particularly useful when applied to high-throughput single cell and spatial transcriptomics (ST) datasets where we can map continuous changes to the regulatory networks at a high resolution in physical or latent space. However, the size and complexity of the optimization problems often necessitate use of approximation techniques that trade statistical accuracy for computational efficiency. In this work, we introduce a new algorithm that leverages recent mathematical advances to preserve statistical accuracy and minimize bias in estimation of interaction strengths. As a result, our approach more accurately identifies gene interactions without compromising computational speed. We evaluated our algorithm with extensive numerical simulations, and show applications to single cell and ST datasets in glioblastoma. Our algorithm is readily extensible to high-resolution ST data, is highly scalable and versatile, and can be adopted for studying a variety of biological phenomena. We make our developed algorithms open-source, and believe it will be a valuable contribution to the research community.

\section{Introduction}

Gene regulatory networks (GRNs) are central to cellular decision-making, encoding context-specific responses to developmental cues and microenvironmental changes~\cite{barabasi2011network}. The advent of high-throughput single-cell and spatial transcriptomic technologies has enabled unprecedented resolution in profiling gene expression across developmental, spatial, and disease contexts~\cite{gulati2025profiling,desta2025advancements}. Understanding the dynamics of GRNs---rather than isolated gene expression changes---provides deeper insight into cellular state transitions, regulatory hierarchies, and potential therapeutic targets, especially in complex diseases such as cancer~\cite{trapnell2014dynamics,dimitrakopoulos2018network}. A fundamental challenge in GRN inference is reducing the number of false positive discoveries, as biological networks are inherently sparse, with only a subset of possible gene-gene interactions realized in any given context~\cite{lundqvist2025topology,frontiers2022sparsity}.

A principled way to promote the desired sparsity during network inference is through the use of the $\ell_0$ penalty function~\cite{marjanovic2015l_}. This function directly penalizes the number of non-zero edges of the network, thereby favoring simple and interpretable network structures. However, incorporating the $\ell_0$ penalty leads to a non-convex formulation that is computationally intractable at the scale required for realistic networks. To address this challenge, convex relaxations based on the $\ell_1$ penalty are often employed. The $\ell_1$ penalty promotes sparsity by penalizing the absolute values of edge weights, effectively removing weaker connections. Nevertheless, it applies uniform shrinkage across all edge weights, introducing bias in the estimates of stronger interactions (see~\cite[Example 1]{fattahi2021scalable}). This drawback underlies limitations of the popular \textit{graphical lasso}(GL)~\cite{friedman2008sparse} and its extension~\cite{danaher2014joint,hallac2017network}. In contrast, exact solutions to the original $\ell_0$ penalized formulation circumvent these issues~\cite{stromberg1992breakdown,rousseeuw2005robust,zioutas2005deleting,xu2024integer,kucukyavuz2023consistent,manzour2021dag,kim2021scalable}. Motivated by this gap and aiming to address it, we introduce an algorithm that leverages the $\ell_0$ penalty to directly enforce sparsity without inducing shrinkage bias, thereby enabling more accurate inference of gene regulatory networks.

This work presents an efficient and scalable inference procedure that addresses the computational challenges introduced by the $\ell_0$ penalty. The algorithm offers two distinct advantages. First, the optimization problem decomposes across genes, enabling parallel computation and making the method suitable for high-dimensional settings. In our numerical experiments, we scale to 2,000 genes, demonstrating its practicality for real-world inference tasks. Second, although each subproblem is non-convex due to the $\ell_0$ penalty, we employ a specialized algorithm~\cite[Algorithm 2]{bhathena2025parametric} based on dynamic programming, which solves each subproblem in quadratic time with respect to the number of populations---providing a crucial advantage when analyzing high-resolution datasets to study fine-grained changes in regulatory logic of cells along spatio-temporal gradients. Our experiments show that the runtime of the proposed approach is comparable to state-of-the-art methods based on convex $\ell_1$ penalty, and significantly faster than classical techniques such as the graphical lasso, which directly optimize the penalized log-likelihood.

We further generalize our method to handle populations with explicit categorical distinctions, such as primary versus recurrent tumor samples in defined spatial niches. After rigorous validation on synthetic benchmarks, we demonstrate the utility of our approach in three key applications using single-cell and spatial transcriptomics data from glioblastoma that was curated locally by our team from patients in the Michigan Medicine hospital: 
(i) reconstructing gene networks across tumor clusters in a single-cell GBM atlas, 
(ii) mapping continuous network rewiring along hypoxia gradients in GBM spatial tissue sections, and 
(iii) identifying niche-specific network differences between primary and recurrent tumors. As large-scale transcriptomic datasets become increasingly prevalent, our work provides a scalable and accurate computational framework for dissecting tumor biology at high resolution.

 \paragraph{Notation} We use lowercase letters to denote scalars and vectors, and uppercase letters to denote matrices. The $(i,j){\text{th}}$ entry of a matrix $M$ is denoted by either $M_{i,j}$ or $[M]_{i,j}$.  Let $\mathbb{S}^p_+$ denote the set of $p \times p$ symmetric positive semi-definite matrices, that is, symmetric matrices whose eigenvalues are all non-negative. For a matrix $M$, we use $\|M\|_{\ell_q}$ to represent the element vise $\ell_{q}$-norm of $M$, defined as $\|M\|_{\ell_q}=\left(\sum_{i,j} \left|M_{i,j}\right|^q\right)^{1/q}$. The notation $\|M\|_{0,\mathrm{off}}$ refers to the total number of nonzero off-diagonal elements in $M$. The trace of $M$, denoted by $\operatorname{Tr}(M)$, is the sum of its diagonal elements, and $\det(M)$ denotes the determinant of $M$. The scalar sign function, denoted by $\operatorname{sign}(\cdot)$, is defined as $\operatorname{sign}(x) = |x|/x$ for $x \neq 0$, where $|x|$ denotes the absolute value of $x$. Given two functions $f(n)$ and $g(n)$, we write $f(n)=\mathcal{O}(g(n))$ when there exists a universal constant $C>0$ satisfying $f(n) \leq Cg(n)$ for all large enough $n$. For a function $f:\mathcal{X}\to\R$ the notation $\argmin_{x\in \mathcal{X}} f(x)$ denotes a global minimizer of $f$; that is, if $x^*=\argmin_{x\in\mathcal{X}} f(x)$ then $f(x^*)\le f(x)$ for all $x\in \mathcal{X}$. When the domain $\mathcal{X}$ is clear from context, we sometimes omit it for brevity.

\subsection{Background and related work} 
In this work, we consider the problem of jointly inferring gene network structure for several related populations that share varying levels of proximity to each other. This similarity may arise from experimental design (e.g., samples collected at different time points or spatial locations) or may be data-driven (e.g., proximity between single-cell clusters in a latent embedding). The populations could represent clusters of cells aligned along a (pseudo-)temporal axis in a single cell dataset, or spatially-localized clusters in spatial transcriptomics (ST) slides. A widely used framework to model such dynamic networks is based on \textit{Markov random fields} (MRFs). Specifically, each population $k$ is associated with an undirected graph $G_k(V_k, E_k)$, where $V_k$ represents the set of random variables corresponding to genes, and $E_k$ encodes the conditional dependence among them. When the underlying distribution of the random variables is assumed to be Gaussian, the model is referred to as a Gaussian Markov Random Field (GMRF). In this case, the structure of the graph $(G_k)$ is fully captured by the sparsity of the precision matrix $\Theta_k$ (the inverse of the covariance matrix). Specifically, a zero value for the $(i,j){\text{th}}$ entry $[\Theta_k]_{i,j}$ implies that variables $i$ and $j$ are conditionally independent, given all other variables at population $k$. As such, accurate estimation of the precision matrices $\{\Theta_k\}$ is fundamental to recovering the underlying gene regulatory networks and understanding their variations across populations.

For a single population, the graphical lasso~\cite{friedman2008sparse} estimates the precision matrix $ \Theta$ by minimizing the penalized negative likelihood. The resulting estimator, denoted by $\hat \Theta$, is given by:
\begin{align}\tag{MLE}\label{eq:MLE}
    \hat \Theta = \argmin_{\Theta \in \mathbb{S}^p_+} \left\{ \operatorname{Tr}(\Theta \hat\Sigma) - \log\det(\Theta) + \lambda \|\Theta\|_{\ell_1} \right\},
\end{align}
where $\hat\Sigma$ denotes the sample covariance of the population and the scalar $\lambda\ge 0$ controls the degree of sparsity. To capture dependencies across multiple populations, the Joint Graphical Lasso (JGL)~\cite{danaher2014joint,ma2016joint} extends this framework by adding a similarity penalty between the inferred networks:
\begin{align}\label{eq:JGL}
    \{\hat \Theta_k\}_{k=1}^K =
   \argmin_{\Theta_1, \ldots, \Theta_K \in \mathbb{S}^p_+}\! \left\{  \sum_{k=1}^K \left( \operatorname{Tr}(\Theta_k \hat\Sigma_k) - \log\det(\Theta_k) \right) + \sum_{k=1}^K  \lambda \|\Theta_k\|_{\ell_1} + P(\Theta_1, \ldots, \Theta_K) \right\},\tag{JGL}
\end{align}
where individual populations are identified by their subscript $k$, and  $P(\cdot)$ enforces the desired level of similarity among the inferred networks.

While maximum likelihood-based approaches are widely adopted, they become computationally intensive for large-scale problems due to the need to optimize over the log-determinant term~\cite{pmlr-v54-wang17e,fattahi2021scalable}. Alternative estimators such as CLIME~\cite{cai2011constrained} and the Elementary Estimator~\cite{yang2014elementary} introduce reformulations that avoid the log-determinant term, improving scalability. These methods have been further extended to joint and time-varying settings~\cite{lee2015joint,fattahi2021scalable,fattahi2023solution}. However, they rely on the $\ell_1$ penalty, which introduces bias in the estimates and can limit recovery of the true network structure. In contrast, the $\ell_0$ penalty offers a direct and unbiased approach to enforcing sparsity. Incorporating the $\ell_0$ penalty leads to an optimization problem that falls within the class of mixed-integer quadratic programs (MIQPs). While solving general MIQPs is NP-hard~\cite{bertsimas2016best,bertsimas2020certifiably}, recent advances have identified tractable instances of these challenging problems~\cite{han2022polynomial,liu2023graph,gomez2024real}. 
In our recent work, we showed that when the dependency structure across populations forms a tree (i.e., an acyclic graph), the corresponding MIQP instance can be solved efficiently by ~\cite[Algorithm 2]{bhathena2025parametric}. Additional details on the algorithm are provided in Methods.

\subsection{Our contributions}
\begin{enumerate}
    \item \textbf{Sparsity control with $\ell_0$ penalty:} We extend our previous framework~\cite{ravikumar2023efficient} to incorporate the $\ell_0$ penalty for joint inference of multiple related GMRFs. By leveraging recent algorithmic advances~\cite{bhathena2025parametric}, we efficiently solve the resulting MIQP for tree-structured spatial relationships, enabling scalable and accurate inference of context-specific gene regulatory networks.
    \item \textbf{Handling categorical populations:} Our method accommodates explicit categorical distinctions between populations, such as primary versus recurrent tumor samples in distinct spatial niches.
    \item \textbf{Demonstrated utility in glioblastoma biology:} We validate our approach on synthetic benchmarks, and then showcase the application of our framework to three distinct scenarios in glioblastoma transcriptomics datasets: (i) network reconstruction across tumor clusters, (ii) mapping network rewiring along hypoxia gradients, and (iii) identifying niche-specific differences between primary and recurrent tumors.
\end{enumerate}

\section{Materials and methods}

In this section, we describe our framework for the joint inference of the precision matrices. We begin by presenting additional definitions used throughout the manuscript. Subsequently, we introduce an optimization problem for the joint inference of GMRFs and demonstrate how this formulation naturally decomposes, enabling parallel solution of subproblems. We then present the algorithm and provide a comprehensive analysis of its computational complexity. The following subsections describe an extension of our algorithm for categorical data and the implementation details of our algorithm.

Let $p$ denote the number of genes of the underlying network. For the $k$th population, gene expression data is collected in a matrix $X_k\in\R^{p\times n_k}$ where each column corresponds to a gene expression profile of a single biological sample and $n_k$ is the number of such samples. This work assumes that each population follows a multivariate Gaussian distribution with zero mean and covariance matrix denoted by $\Sigma_k^* \in \mathbb{S}^p_+$ for populations $k = 1, \ldots, K$. In other words, every column of $X_k$ is sampled from $\mathcal{N}(0, \Sigma_k^*)$. The zero mean assumption is common practice in single-cell analysis~\cite{wolf2018scanpy,hao2024dictionary} and follows without loss of generality, as the data can be centered and rescaled. The sample covariance matrix for the $k$th population is given by $\hat\Sigma_k=\frac{1}{n_k}X_{k}(X_{k})^\top$. Finally, we use $\Theta^*_k=(\Sigma^*_k)^{-1}$ to denote the true precision matrix and $\hat\Theta_k$ to denote the inferred precision matrix of the $k$th population. 

The central idea of our approach is to infer $\{\Theta^*_k\}_{k=1}^K$ from the observed sample covariances $\{\hat\Sigma_k\}_{k=1}^K$. To achieve this, we first introduce the concept of a population \textit{hypergraph} to model the neighborhood structure among different populations.

\paragraph{Population hypergraph} To capture dependencies among the $K$ populations, we introduce a second undirected graph (the \textit{Hypergraph}) $\mathcal{H}(V_H,W_H)$, whose vertices $V_H$ denotes the different populations and the edge weights capture some notion of proximity among them. The edge weights of the hypergraph, denoted $W_{i,j}$, control the strength of the similarity between populations $i$ and $j$, with larger weights encouraging greater alignment in the inferred networks. The matrix $W\in\R^{K\times K}$ collects all such edge weights and encodes the structure of the hypergraph. A key structural assumption we make is that the hypergraph $\mathcal{H}$ forms a tree structure --- i.e., it does not contain loops. This condition is generally met in biological systems, as most developmental and disease processes follow a branching, pseudo-temporal progression \cite{saelens2019comparison, trapnell2014dynamics}. It also facilitates solving our proposed optimization problem with an $\ell_0$ penalty, as we will detail later. As an illustrative example, Fig~\ref{fig:schema-direct} shows the tree \textit{hypergraph} connecting different spatial niches in a GBM ST slide, and changing edge-weights in this graph allows for flexible modeling of gene networks across these spatial environments.

\begin{figure}
    \centering
    \includegraphics[width=0.3\linewidth]{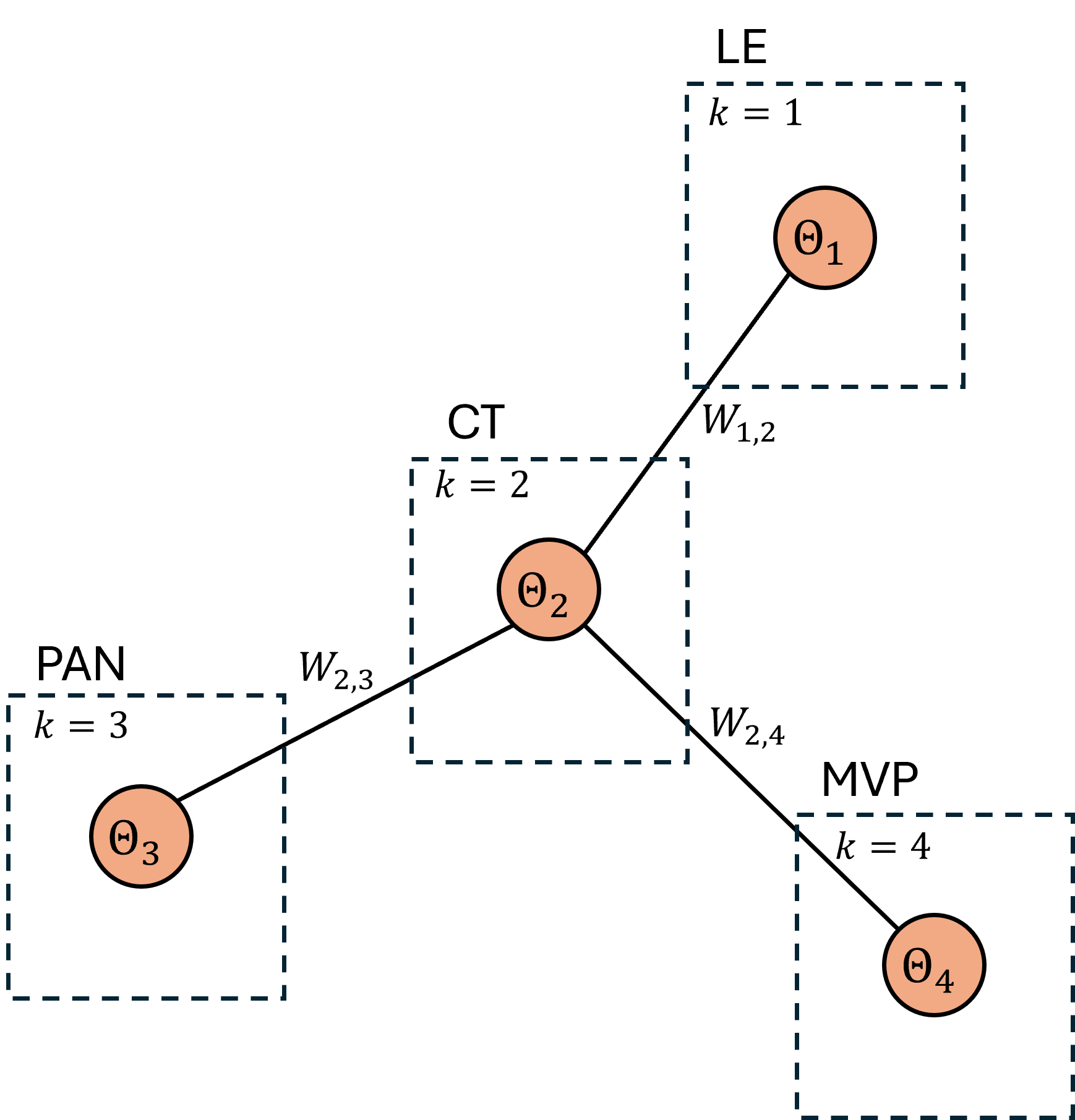}\qquad\qquad\qquad 
    \includegraphics[width=0.45\linewidth]{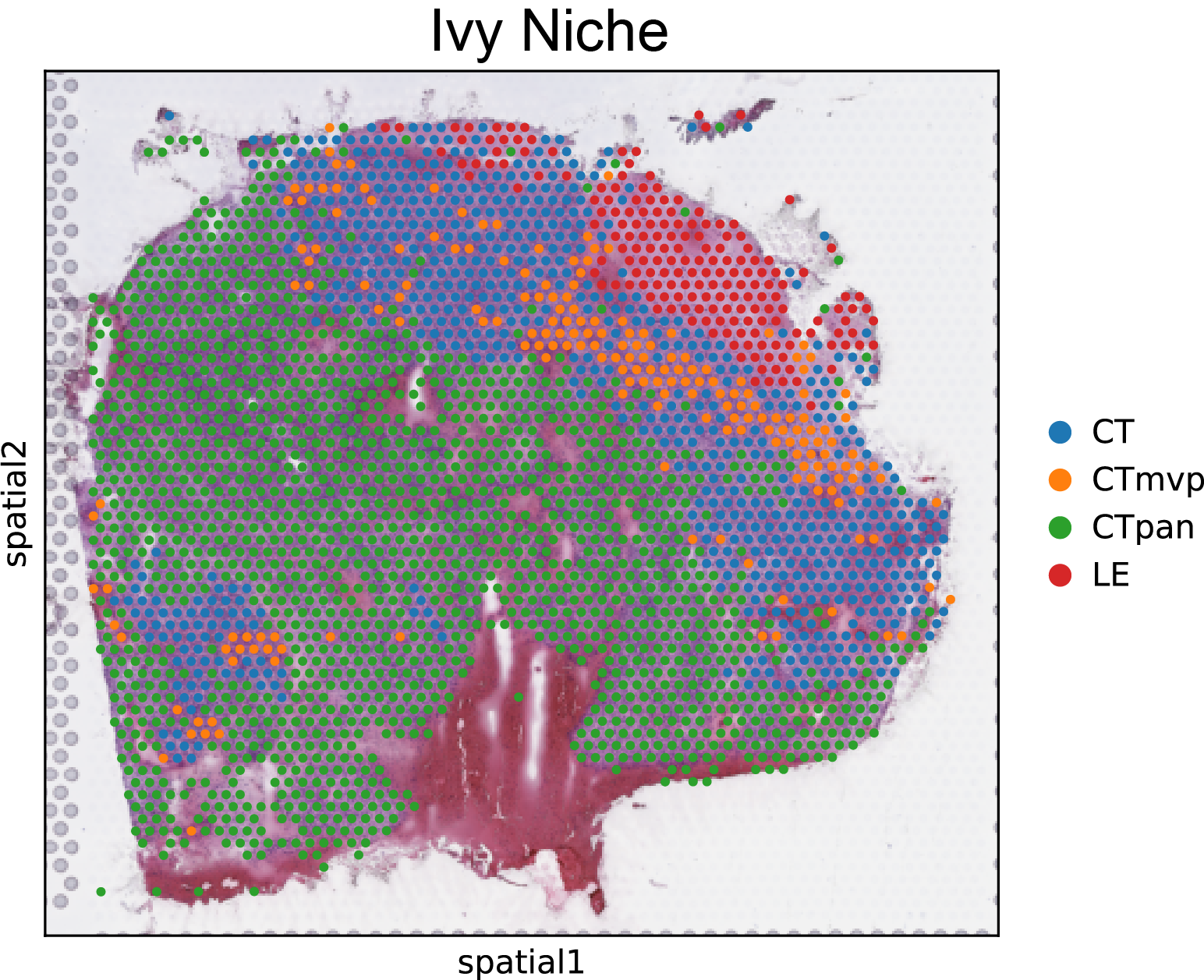}
    \caption{\textbf{Representative scenario for joint network inference across multiple spatial niches.} Spatial clusters (niches) are modeled as distinct populations. The relationships among populations is flexibly modeled using arbitrary tree-structured hypergraphs. Edge weights in the hypergraph encode similarity between populations, and guide the joint inference algorithm in estimation of shared and unique regulatory features. }
    \label{fig:schema-direct}
\end{figure}

\paragraph{Problem formulation and inference procedure}

We now present the optimization problem used for estimating the precision matrices $\hat \Theta_k$ from the sample covariances $\hat\Sigma_k$. Recall that the sample covariance matrices are computed from the gene expression data across different populations. The relative proximity among these populations is modeled using the hypergraph, with its structure encoded in matrix $W\in\R^{K\times K}$. The precision matrices $\{\hat \Theta_k\}_{k=1}^K$ are estimated by solving the following optimization problem: 

\begin{equation}\tag{ELEM-$0$}
	\begin{aligned}
	\{\hat{\Theta}_k\}_{k=1}^K =\argmin_{\Theta_1,\ldots,\Theta_k}  \underbrace{\sum_{k=1}^K\left \|\Theta_k - \tilde{F}^*(\hat{\Sigma}_{k})\right\|_{\ell_2}^2}_{\text{backward mapping deviation}} + \underbrace{\lambda\sum_{k=1}^K \left\|{\Theta_k}\right\|_{0,\mathrm{off}}}_{\ell_0\text{-penalty}}+\underbrace{\gamma \sum_{l>k}W_{kl}\left\|\Theta_{k}-\Theta_{l}\right\|_{\ell_2}^2}_{\text{similarity penalty}}.
    \end{aligned}
	\label{eq: l0}
\end{equation}
In the expression above, the term \( \tilde{F}^*(\hat{\Sigma}_k) \) denotes an approximation of the so-called \textit{backward mapping}, a quantity that plays a central role in the inference of MRFs. For further technical details on the (approximate) backward mapping and its connection to MRFs, we refer the reader to \cite[Chapter 3]{wainwright2008graphical}.
 Classical algorithms like JGL estimate $\{\hat{\Theta}_k\}_{k=1}^K $ using a variant of the backward mapping derived from the maximum likelihood principle. However, this approach requires optimizing a log-determinant term, rendering it computationally expensive and intractable at scale. To address this, we adopt the approximate backward mapping introduced by Yang et al.~\cite{yang2014elementary}. For a fixed constant \( \nu > 0 \), this approximate backward mapping is defined as the inverse of the soft-thresholded \( \hat{\Sigma}_k \), that is:
\begin{align}\label{eq: app backward map}
    \tilde F^*(\hat \Sigma_k)=[\texttt{ST}_\nu(\hat \Sigma_k)]^{-1},
\end{align}
where $\texttt{ST}_{\nu}:\R^{p\times p}\to \R^{p\times p}$ is the soft thresholding operator defined as 
\begin{align*}
    [\texttt{ST}_{\nu}(M)]_{i,j}=\begin{cases}
        M_{i,j}-\operatorname{sign}(M_{i,j})\cdot\min\{\nu,|M_{i,j}|\}\quad&\text{ when } i\ne j\\
        M_{i,j}&\text{otherwise}
    \end{cases}
\end{align*}
A key advantage of the above backward mapping lies in its computational efficiency. Specifically, after soft-thresholding, $\hat{\Sigma}_k$ becomes a sparse matrix, whose inverse can be computed efficiently using techniques such as sparse Cholesky decomposition.

In the formulation given in~\eqref{eq: l0}, the \textit{backward mapping deviation} term encourages $\hat\Theta_k$ to be close to the approximate backward mapping $\tilde F^*(\hat\Sigma_k)$ for $k=1,\ldots,K$. The $\ell_0$ penalty promotes sparsity by penalizing the number of non-zeros in the off-diagonal elements in $\{\hat\Theta_k\}_{k=1}^K$. The \textit{similarity penalty} enforces similarity between precision matrices of neighboring populations, with $W_{k,l}$ reflecting the relative proximity. The tradeoff between sparsity and similarity is controlled by parameters $\lambda,\gamma>0$, respectively. 

From a computational perspective, the formulation above has a distinct advantage: the optimization problem decomposes across every $(i,j)$ coordinate of the matrix variables $\{\Theta_k\}_{k=1}^K$. This allows the problem to be split into independent subproblems, one for each $(i,j)$ coordinate, which can be solved in parallel. Every subproblem optimizes over $K$ variables, corresponding to different populations. We handle the diagonal and off-diagonal coordinates separately. For the diagonal elements, the problem reduces to a \textit{quadratic program} given by
\begin{align}
    \hspace{-2mm}\{[\hat{\Theta}_k]_{i,i}\}_{k=1}^K =
    \argmin_{\{[\Theta_k]_{i,i}\}_{k=1}^K}   \sum_{k=1}^K\left([\Theta_k]_{i,i} - \left[\tilde{F}(\hat{\Sigma}_{k})\right]_{i,i}\right)^2 +\gamma \sum_{l>k}W_{kl}\left([\Theta_{k}]_{i,i}-[\Theta_{l}]_{i,i}\right)^2.
    \label{eq: diag elem l0}
\end{align}
This problem admits a closed-form solution. Moreover, since $W$ encodes a tree-structured hypergraph, the overall solution can be obtained in $\mathcal{O}(K)$ time~\cite{george1981computer,vandenberghe2015chordal}.

Due to the symmetry of the precision matrices, it suffices to consider only the off-diagonal entries corresponding to the index pair $(i,j)$ with $1\le i< j\le p$. For each such pair, the original problem decomposes into the following element-wise subproblem
\begin{align}
	\begin{aligned}
	\{[\hat{\Theta}_k]_{i,j}\}_{k=1}^K =
    \argmin_{\{[\Theta_k]_{i,j}\}_{k=1}^K}  \sum_{k=1}^K\left([\Theta_k]_{i,j} - \left[\tilde{F}(\hat{\Sigma}_{k})\right]_{i,j}\right)^2 + \lambda\sum_{k=1}^K \|{[\Theta_k]_{i,j}}\|_{0}+\gamma \sum_{l>k}W_{kl}\left([\Theta_{k}]_{i,j}-[\Theta_{l}]_{i,j}\right)^2.
    \end{aligned}
	\label{eq: elem l0 ij}   
\end{align}
To handle the $\ell_0$ penalty term, the optimization problem above can be reformulated as an MIQP. In the general case, solving MIQPs is NP-hard, with worst-case complexity that scales exponentially with the number of populations $K$, rendering the problem intractable for large-scale instances. When the spatial structure encoded by the matrix $W$ forms a tree, Bhathena et al.~\cite{bhathena2025parametric} showed that the problem can be solved in $\mathcal{O}(K^2)$ time via dynamic programming. Accordingly, to solve the optimization problem~\eqref{eq: elem l0 ij}, we adopt this dynamic programming approach. Its computational efficiency enables us to recover gene network changes at high resolution along spatio-temporal gradients in single-cell and spatial transcriptomics datasets, as demonstrated in our applications.

Building on the optimization problem, we introduce the algorithm for solving~\eqref{eq: l0}. A complete description is provided in Algorithm~\ref{alg}. The procedure begins by computing the sample covariance matrices $ \{\hat{\Sigma}_k\}_{k=1}^K$ from the gene expression data $\{X_k\}_{k=1}^K$. Next, the approximate backward mapping $\{\tilde{F}^*(\hat{\Sigma}_k)\}_{k=1}^K$ is computed according to Equation~\eqref{eq: app backward map}. For every index pair $1\le i\le j \le p$, we then apply the dynamic programming algorithm from~\cite{bhathena2025parametric} to estimate the corresponding entries $[\hat\Theta_k]_{i,j}$ for all $k$. Finally, the lower triangular coordinates are set equal to the corresponding upper triangular entries, i.e., $[\hat\Theta_k]_{j,i} \leftarrow [\hat\Theta_k]_{i,j}$ for $i<j$.

\begin{algorithm}[h]
		\caption{Algorithm to Solve~\eqref{eq: l0}}
		\textbf{Input:} Data samples $\{X_k\}_{k=1}^K$, parameters $(\mu,\gamma,\nu)$, and weight matrix $W$\\
		\textbf{Output:} Precision matrix $\{\hat \Theta_k\}_{k=1}^K$
		\begin{algorithmic}[1]
            \State Compute the sample covariance $\hat\Sigma_{k}=\frac{1}{n_k} X_kX_{k}^\top$ for every $k=1,\ldots,K$
            \State Compute the approximate backward mapping $\tilde F^*(\hat \Sigma_k)=\left[\texttt{ST}_{\nu}\left(\hat\Sigma_k\right)\right]^{-1}$
			\For{\text{ every } $1\le i\le j\le p$}
            \State Obtain $\left\{ [\hat\Theta_k]_{i,j} \right\}_{k=1}^K$ by solving Equation~\eqref{eq: elem l0 ij} using \cite[Algorithm 2]{bhathena2025parametric}
            \State $[\hat\Theta_k]_{j,i} \leftarrow [\hat\Theta_k]_{i,j}$ for $k=1,\ldots,K$ and $i\ne j$
            \EndFor
			\State\Return $\{\hat\Theta_k\}_{k=1}^K$
		\end{algorithmic}
        \label{alg}
\end{algorithm}

\begin{theorem}\label{thm}
    Algorithm~\ref{alg} has a runtime of $\mathcal{O}\left(Kp^2n+Kp^3+K^2p^2\right)$ and requires $\mathcal{O}\left(Kpn+K^2p^2\right) $ memory.
\end{theorem}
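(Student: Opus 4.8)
The plan is to prove both bounds by accounting for each line of Algorithm~\ref{alg} separately and then summing the contributions, treating the per-subproblem cost of~\cite[Algorithm 2]{bhathena2025parametric} as a known quantity.

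First I would analyze the runtime. Line~1 forms each sample covariance $\hat\Sigma_k=\frac{1}{n_k}X_kX_k^\top$ as the product of a $p\times n_k$ matrix with its transpose, costing $\mathcal{O}(p^2 n_k)$ per population; summing over $k$ and writing $n$ for the representative sample size gives $\mathcal{O}(Kp^2n)$. Line~2 computes $\tilde F^*(\hat\Sigma_k)=[\texttt{ST}_\nu(\hat\Sigma_k)]^{-1}$: the soft-thresholding touches each of the $p^2$ entries once, at total cost $\mathcal{O}(Kp^2)$, while each matrix inverse costs at most $\mathcal{O}(p^3)$, for a total of $\mathcal{O}(Kp^3)$ (the sparse-Cholesky remark improves only the typical case, so the worst-case cubic bound is what enters the big-$\mathcal{O}$). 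The loop in Lines~3--6 runs over the $\binom{p+1}{2}=\mathcal{O}(p^2)$ index pairs $1\le i\le j\le p$; for each pair, Line~4 invokes~\cite[Algorithm 2]{bhathena2025parametric}, which by the cited guarantee solves the tree-structured subproblem~\eqref{eq: elem l0 ij} in $\mathcal{O}(K^2)$ time, and Line~5 performs $\mathcal{O}(K)$ assignments. Hence the loop contributes $\mathcal{O}(K^2p^2)$, and adding the three contributions yields the claimed $\mathcal{O}(Kp^2n+Kp^3+K^2p^2)$.

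Next I would bound the memory. Storing the input $\{X_k\}_{k=1}^K$ requires $\mathcal{O}(Kpn)$ space. The sample covariances, the backward mappings $\{\tilde F^*(\hat\Sigma_k)\}$, and the output precision matrices $\{\hat\Theta_k\}$ are each $K$ dense $p\times p$ matrices, contributing $\mathcal{O}(Kp^2)$, while the weight matrix $W$ adds $\mathcal{O}(K^2)$. The remaining $\mathcal{O}(K^2p^2)$ term accounts for the working memory of the subproblem solver: since the $\mathcal{O}(p^2)$ coordinate subproblems are independent and may be solved concurrently (the decomposition underlying the method's parallelism), and each instance of~\cite[Algorithm 2]{bhathena2025parametric} uses $\mathcal{O}(K^2)$ scratch space for its dynamic-programming tables over the $K$-node tree, the simultaneous footprint is $\mathcal{O}(K^2p^2)$. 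Summing and absorbing lower-order terms gives $\mathcal{O}(Kpn+K^2p^2)$.

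The main obstacle is that both bounds hinge on the per-subproblem $\mathcal{O}(K^2)$ time and memory of the dynamic program, which I would treat as a black box supplied by~\cite[Algorithm 2]{bhathena2025parametric}; the only genuine care needed on our side is to verify that the tree structure of $W$ is exactly the hypothesis under which that $\mathcal{O}(K^2)$ guarantee applies, and to pin down the modeling assumption behind the $K^2p^2$ memory term. If the subproblems are instead processed sequentially with reused scratch space, the working memory collapses to $\mathcal{O}(K^2)$ and the $K^2p^2$ term would need a different justification (for instance, a design choice to retain all $\mathcal{O}(p^2)$ intermediate solutions, or parametric solution paths returned by the solver). Clarifying this assumption is where I would spend the most effort, since it is the one place the bound is not forced by a routine line-by-line count.
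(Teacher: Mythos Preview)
Your proposal is correct and follows essentially the same line-by-line accounting as the paper: sample covariances contribute $\mathcal{O}(Kp^2n)$ time and $\mathcal{O}(Kpn+Kp^2)$ memory, the backward mappings contribute $\mathcal{O}(Kp^3)$ time and $\mathcal{O}(Kp^2)$ memory, and the $\mathcal{O}(p^2)$ coordinate subproblems each cost $\mathcal{O}(K^2)$ time and memory by the cited dynamic-programming guarantee, giving the $\mathcal{O}(K^2p^2)$ terms. Your caution about what justifies multiplying the per-subproblem $\mathcal{O}(K^2)$ memory by $p^2$ is well taken, but the paper's own proof handles this identically---it simply multiplies without distinguishing sequential reuse from concurrent execution---so your argument matches (and is arguably more careful than) the original.
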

\begin{proof}
The algorithm's computational complexity is determined by three key components. First, computing the sample covariance matrix for each population involves matrix multiplication, which requires $\mathcal{O}(p^2 n)$ time and $\mathcal{O}(pn + p^2)$ memory per population. With $K$ populations, this step contributes $\mathcal{O}(Kp^2 n)$ time and $\mathcal{O}(Kpn + Kp^2)$ memory overall. 

Second, computing the approximate backward mapping $\tilde{F}^*(\hat{\Sigma}_k)$ for each population entails an element-wise soft-thresholding followed by a matrix inversion, which incurs $\mathcal{O}(p^3)$ time and $\mathcal{O}(p^2)$ memory per population, resulting in a total cost of $\mathcal{O}(Kp^3)$ time and $\mathcal{O}(Kp^2)$ memory.

Third, solving the core optimization subproblem for each index pair $(i,j)$ via dynamic programming has both time and memory complexity $\mathcal{O}(K^2)$~\cite[Theorem 2]{bhathena2025parametric}. Since this must be performed for all $\frac{p(p+1)}{2}$ matrix coordinates, the total cost of this step is $\mathcal{O}(K^2p^2)$ in both time and memory. 

Combining all components, the overall computational complexity of the algorithm is dominated by the sum of these contributions.
\end{proof}

\subsection{Inference over multiple categories}\label{sec: categorical ELEM-0}

\begin{figure}
    \centering
    \includegraphics[width=0.38\linewidth]{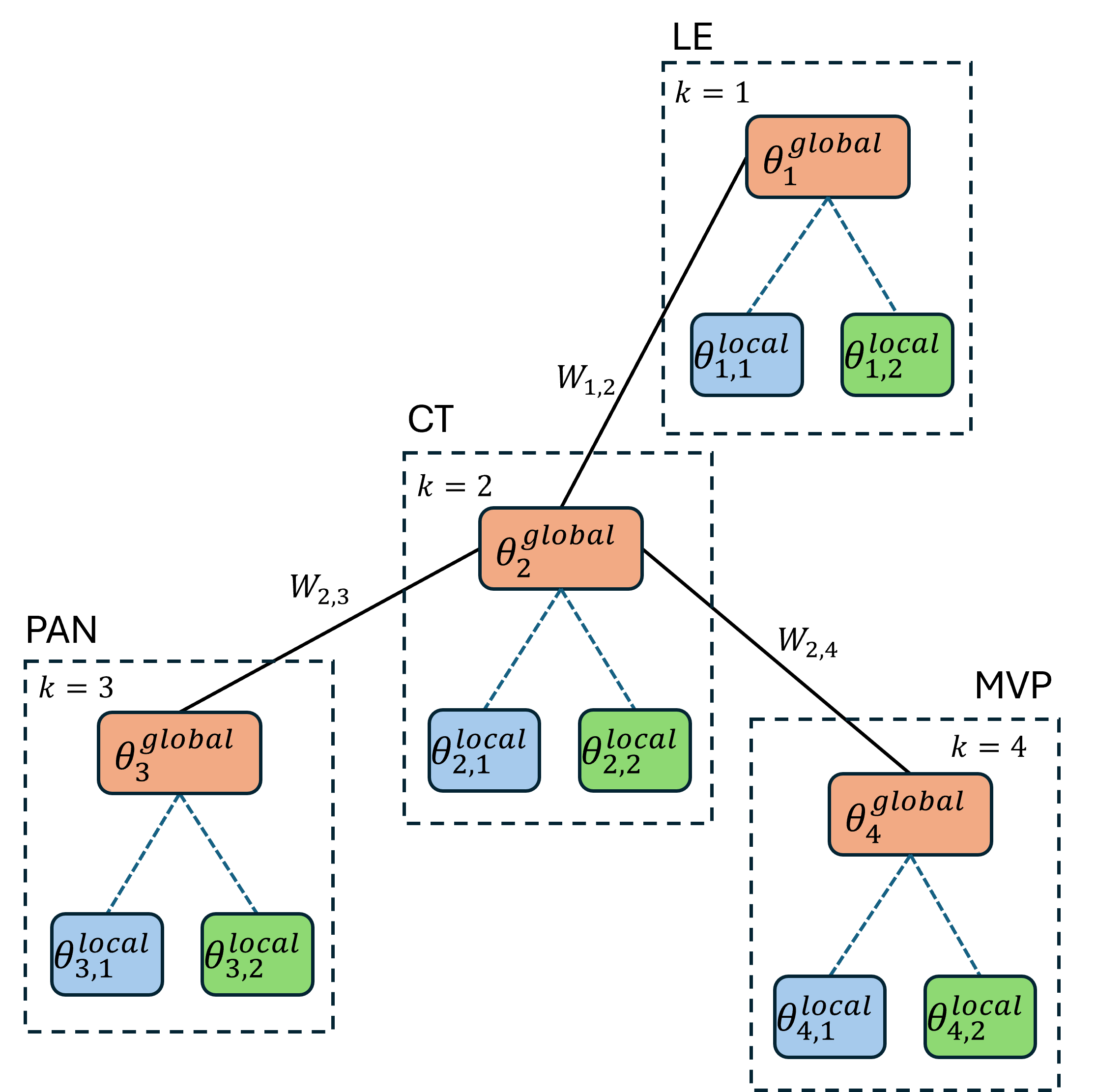}\qquad\qquad\qquad 
    \includegraphics[width=0.42\linewidth]{Figures/ivy-gap.png}
    \caption{\textbf{Schematic of condition-specific network inference on a global hypergraph framework.} Each node in the global hypergraph represents a distinct spatial cluster, with two associated conditions (e.g., primary and recurrent disease). The algorithm jointly infers gene regulatory network structures for each niche-condition pair by inferring both shared (global) and condition-specific edges. This approach enables the identification of niche-specific regulatory rewiring in recurrent tumors, revealing molecular programs that drive treatment resistance and aggressive tumor growth.}
    \label{fig:cat-schema}
\end{figure}

Estimating large-scale networks typically requires more observations than variables $(n \gg p)$ to ensure statistical reliability~\cite{yuan2007model, cai2011constrained}. However, in many biological studies, such as those involving patient or condition-specific data, the number of samples per category is often limited, making robust network estimation challenging~\cite{bertsimas2020certifiably,danaher2014joint}. To address this, we propose an inference framework that leverages a common underlying structure to borrow information across related categories. Specifically, each category-specific network is decomposed into a \textit{global component}, shared across all categories, and a \textit{local component} that captures category-specific variation (see Fig~\ref{fig:cat-schema} for a representative example). By borrowing statistical strength through the global component, our approach improves the estimation accuracy. 

Suppose we observe gene expression data $X_{k,c}$ for each category $c\in\{1,\ldots,C\}$ in population $k\in\{1,\ldots,K\}$. Each matrix $X_{k,c}\in\R^{p\times n_{k,c}}$ contains gene expression measurements for $p$ genes across $n_{k,c}$ samples, with sample covariance matrix $\hat{\Sigma}_{k,c}$. We model the precision matrix $\Theta_{k,c}$ for each population $k$ and category $c$ as the sum of two components: a global component $\Theta^{\mathrm{global}}_{k}$, shared across categories for that particular node in the hypergraph, and a local component $\Theta^{\mathrm{local}}_{k,c}$, specific to category $c$. That is, $$\Theta_{k,c}=\Theta_{k}^{\mathrm{global}}+\Theta^{\mathrm{local}}_{k,c}.$$ 
This decomposition enables the model to capture both spatial structure (through the global components) and category-specific deviations (through the local components).

The optimization problem for categorical inference of the precision matrix is given as
\begin{align*}\label{eq: cat}
    &\hspace{-12mm}\left\{\hat \Theta_k^{global},\hat\Theta_{k,1}^{\mathrm{local}},\ldots,\hat\Theta_{k,C}^{\mathrm{local}}\right\}_{k=1}^K=\\
    &\hspace{-12mm}\argmin_{\left\{ \Theta_k^{\mathrm{global}},\Theta_{k,1}^{\mathrm{local}},\ldots,\hat\Theta_{k,C}^{\mathrm{local}}\right\}_{k=1}^K}
    \underbrace{ 
     \sum_{k=1}^K\sum_{c=1}^C \norm{\Theta_k^{\mathrm{global}}+\Theta_{k,c}^{\mathrm{local}}-\tilde F(\hat\Sigma_{k,c})}^2_{\ell_2} }_{\text{backward mapping deviation}}
     +\underbrace{\gamma \sum_{l>k}W_{kl}\norm{\Theta^{\mathrm{global}}_k-\Theta^{\mathrm{global}}_l}^2_{\ell_2}}_{\text{similarity penalty}}\\
    &\hspace{8mm}+\underbrace{\lambda\sum_{k=1}^K\left[\norm{\Theta_{k}^{global}}_{0,\mathrm{off}}+\sum_{c=1}^C\norm{\Theta^{\mathrm{local}}_{k,c}}_{0,\mathrm{off}}\right]}_{\text{sparsity penalty}}
    +\underbrace{\alpha\sum_{k=1}^K\left[\norm{\Theta_k^{\mathrm{global}}}_{\ell_2}+\sum_{c=1}^C\norm{\Theta_{k,c}^{\mathrm{local}}}_{\ell_2}\right]}_{\text{ridge penalty}}.
    \tag{Categorical ELEM-0}
\end{align*}
Here, the \textit{backward mapping deviation} term ensures that the precision matrix, $\hat\Theta_{k,c}=\hat\Theta^{\mathrm{global}}_{k,c}+\hat\Theta^{\mathrm{local}}_{k,c}$ closely matches the approximate backward mapping $\tilde F^*(\Sigma_{k,c})$ for every $k$ and $c$. The $\ell_0$ penalty enforces sparsity in off-diagonal elements of the global and local components. The \textit{similarity penalty} encourages similarity only among global components according to the structure encoded by $W$. Finally, the \textit{ridge penalty} is used to enforce numerical stability, as required for the optimization algorithm~\cite{bhathena2025parametric}.

The procedure for estimating category-specific precision matrices closely follows the algorithm described previously in Algorithm~\ref{alg}. The main distinction in this setting is the presence of multiple categories, each associated with its local component. The sample covariance matrices ${\hat\Sigma_{k,c}}$ are computed separately for each population $k$ and category $c$, followed by the approximate backward mapping $\tilde F^*(\hat\Sigma_{k,c})$. The dynamic programming routine from~\cite[Algorithm 2]{bhathena2025parametric} is then applied to jointly estimate the global and local components across all index pairs. The lower triangular entries of the precision matrix are then filled in by mirroring the values across the diagonal. Finally, the category-specific precision matrix is formed by summing the corresponding global and local components. The overall procedure has a runtime of $\mathcal{O}\left(CKp^2n+CKp^3+C^2K^2p^2\right)$ and requires $\mathcal{O}\left(CKpn+C^2K^2p^2\right) $ memory. These expressions follow analogously to the proof of Theorem~\ref{thm}, but now account for $C$ distinct categories. 

\subsection{Parameter selection and implementation}

To implement Algorithm~\ref{alg}, we must tune the hyperparameters $\gamma$, $\lambda$, and $\nu_k$, which control spatial smoothness, sparsity, and the soft-thresholding level in the approximate backward mapping, respectively. Let $\hat{\Theta}_k(\gamma, \lambda, \nu_k)$ denote the estimated precision matrix for population $k$ obtained by solving the optimization problem with the set of hyperparameters fixed at $(\gamma, \lambda, \nu_k)$. We define the grid of candidate parameter tuples as
$$
\mathcal{P} = \{(\gamma, \lambda, \nu_k): \gamma \in \Gamma, \lambda \in \Lambda, \nu_k \in N\},
$$
where \(\Gamma, \Lambda\), and \(N\) are discrete sets representing the grid search ranges for the respective parameters.

We select the optimal parameters using the extended Bayesian Information Criterion (eBIC)~\cite{foygel2010extended}, which balances data fit and model complexity:
$$
(\hat \gamma, \hat \lambda, \hat \nu_k) = \argmin_{(\gamma, \lambda, \nu_k) \in \mathcal{P}} \mathrm{eBIC}(\gamma, \lambda, \nu_k),
$$
where
$$
\hspace{-3mm}\mathrm{eBIC}(\gamma, \lambda, \nu_k) = \sum_{k=1}^K n_k\left[\Tr(\hat \Sigma_k \hat \Theta_k(\gamma, \lambda, \nu_k)) - \log \det \hat \Theta_k(\gamma, \lambda, \nu_k)\right] + \log(n_k)\,\mathrm{df}_k + 4\,\mathrm{df}_k \log p, 
$$
and \(\mathrm{df}_k\) denotes the degrees of freedom, defined as the number of nonzero entries in the off-diagonal of \(\hat \Theta_k(\gamma, \lambda, \nu_k)\). The extended BIC has a tunable parameter that can be used to limit the model complexity. In this work, we fix this parameter to a constant value to ensure consistent model selection across experiments. Parameter selection proceeds by performing a grid search over \(\mathcal{P}\), solving the optimization problem for each candidate tuple, and selecting the one that yields the minimum extended BIC. The structure and edge weights of the population hypergraph can be imposed in an ad hoc manner, and are typically learned from data, as we illustrate in our applications. 

For inference across categories, we set the ridge penalty to $\alpha = 0.01$, which enhances the numerical stability of the optimization procedure.

\section{Numerical results on synthetic data}

In this section, we evaluate the performance of our algorithm using a series of synthetic datasets designed to mimic realistic gene network structures. All results on synthetic datasets are averaged over 5 trials. We benchmark our method against JGL~\cite{danaher2014joint}, FASJEM~\cite{wang2017fast}, ELEM-1~\cite{ravikumar2023efficient}, and GRNBoost2~\cite{moerman2019grnboost2}. JGL serves as a classical baseline based on the MLE framework. FASJEM and ELEM-1 are recent, scalable algorithms that have demonstrated strong empirical performance, and GRNBoost2 is a widely used tool in the GRN community. Our approach demonstrates competitive scalability while consistently attaining higher F1 scores, highlighting the advantages of using the $\ell_0$ penalty over its convex surrogates. The Python implementation of our algorithm is publicly available at: 
\url{https://github.com/aareshfb/Network_inference_using_discrete_penalty.git}
.

\subsection{Data generation}\label{sec: data generation}

We adopted a simulation framework similar to~\cite{ravikumar2023efficient} in order to generate synthetic gene networks. Each population is modeled as a disjoint modular network, with $p$ genes divided into $M=10$ modules. Within each module, connectivity follows a power-law degree distribution generated using the Barabási-Albert model~\cite{albert2002statistical}. The resulting modules are combined into a block-diagonal structure, which serves as the true precision matrix for each population. For each network, we simulate data from a zero-mean multivariate Gaussian distribution, varying the sample-to-feature ratio ($n/p$) to assess performance under different data regimes.

To introduce controlled heterogeneity across $K$ populations, we construct a hypergraph using a Minimum Spanning Tree (MST) derived from a $K \times K$ matrix of random positive numbers from the Gaussian distribution. The MST's adjacency matrix encodes the similarity structure among populations. We traverse the MST via breadth-first search, generating each population's covariance matrix by perturbing its parent network: three out of ten modules are modified by adding random weights ( sampled from $\text{Uniform}(-1,1)$) to their nonzero edges. At major branching points (nodes with degree $\geq 3$), we further introduce structural diversity by replacing an unperturbed module with a new power-law module. This process continues until all nodes are visited, defining the true precision matrices for all populations. The topology of this MST determines the degree of similarity among networks. Specifically, the matrix $W$ is constructed as the adjacency matrix of the MST, with entries set to 1 whenever an edge exists between two nodes in the MST, and 0 otherwise. Our experiments systematically vary both the $n/p$ ratio and the size of the hypergraph to assess algorithmic scalability and accuracy.

\subsection{Performance metrics}

We assess network reconstruction accuracy using standard metrics: precision, recall, and F1-score. Precision measures the proportion of predicted edges that are correct. It is calculated by $\precision = \frac{\TP}{\TP + \FP}$, where $\TP$ denotes true positives (correctly inferred edges) and $\FP$ denotes false positives (incorrectly inferred edges). The recall quantifies the proportion of true edges that are recovered. It is given as $\recall = \frac{\TP}{\TP + \FN}$, with $\FN$ representing false negatives (true edges that were missed). The F1 score, defined as $\mathrm{F1} = 2 \frac{\precision \cdot \recall}{\precision + \recall}$, provides a balanced summary of both metrics.

The F1 score is particularly informative for sparse graphical models, where both false positives (over-prediction) and false negatives (missed edges) can significantly impact downstream biological interpretation. By capturing the trade-off between precision and recall, the F1 score serves as a comprehensive measure of inference quality in our experiments.

\subsection{Simulation results}

In our first experiment, we compare the proposed method, ELEM-0, with ELEM-1, FASJEM, JGL, and GRNBoost2. We use the aforementioned extended BIC-based approach for parameter selection for our proposed ELEM-0, as well as for ELEM-1, FASJEM, and JGL. GRNBoost2 does not have any tunable hyperparameters.
The results are presented in Table~\ref{tab:grn_metrics}. In this setting, we fix the number of genes at $p = 250$, the sample size per population at $n/p = 20$, and the total number of populations at $K = 10$.

\begin{table}[h]
\centering
\begin{tabular}{l|ccccc}
\hline
\textbf{Metric} & \textbf{ELEM-0} & \textbf{ELEM-1} & \textbf{FASJEM} & \textbf{JGL} & \textbf{GRNBoost2} \\
\hline
F1-score      &0.88           &0.77         &0.23             &0.26          &0.00          \\
Precision     &0.88           &0.80         &0.14             &0.43          &0.00          \\
Recall        &0.88           &0.74         &0.54             &0.19          &1             \\
Time (s)      &32.66          &6.90         &26.61            &2345.17       &401.53        \\
\hline
\end{tabular}
\caption{\textbf{Comparison of metrics across different GRN inference methods.} Results are averaged over five trials. In these experiments, the number of populations is set to $K = 10$, the number of genes to $p = 250$, and the sample size per population to $n/p = 20$.
}
\label{tab:grn_metrics}
\end{table}

ELEM-0 achieves an F1 score of 0.88, along with high precision and recall, outperforming all other methods. In contrast, FASJEM, JGL, and GRNBoost2 struggle to accurately recover the true network structure, yielding substantially lower F1 scores.
The fourth row of Table~\ref{tab:grn_metrics} reports the average runtime (in seconds) for each algorithm. ELEM-0 achieves a runtime of 32.66 seconds, which is comparable to ELEM-1 and FASJEM. The slightly slower runtime of ELEM-0 relative to ELEM-1 and FASJEM is expected, as both methods rely on convex formulations based on the $\ell_1$ penalty, resulting in more tractable optimization problems. In contrast, ELEM-0 directly addresses the nonconvexity introduced by the $\ell_0$ penalty, leading to slightly higher computational cost but improved performance.  By comparison, JGL and GRNBoost2 exhibit substantially larger runtimes.

Our next set of experiments, summarized in Fig~\ref{fig:exp1}, investigates the effect of the sample-to-feature ratio ($n/p$) and the number of populations ($K$) on the error metrics. In the plots, solid lines represent the mean F1 scores, precision, and recall, while the shaded regions indicate the range between the minimum and maximum values, capturing variability across runs. FASJEM, JGL, and GRNBoost2 consistently yield F1 scores below 0.3 across all tested conditions and are therefore omitted from the plots for clarity. 

The first row of Fig~\ref{fig:exp1} illustrates the performance of ELEM-0 and ELEM-1 across varying $n/p$ ratios. In this experiment, we set $p = 2000$ to simulate a high-dimensional setting representative of real-world gene expression data. The number of populations is fixed at $K = 20$, and $n/p$ is varied from $0.5$ to $30$. For low $n/p$ ratios (i.e., fewer samples per feature), both methods show reduced F1-scores. However, ELEM-0 significantly outperforms ELEM-1 across all trials. When $n/p \geq 5$, ELEM-0 consistently achieves an F1-score around 0.88, while ELEM-1 peaks at a lower F1-score of approximately 0.80. The second row of Fig~\ref{fig:exp1} shows how performance varies with the number of populations $K$. In this experiment, we fix $n/p = 20$, set $p = 250$, and vary the number of populations $K$ from $3$ to $100$. ELEM-0 maintains a stable and high F1-score, ranging between $0.88$ and $0.91$, demonstrating robustness to increasing $K$. In contrast, ELEM-1 exhibits a lower F1-score than ELEM-0 for all population sizes.
 
\begin{figure}
    \hspace{-7mm}\includegraphics[width=0.36\linewidth]{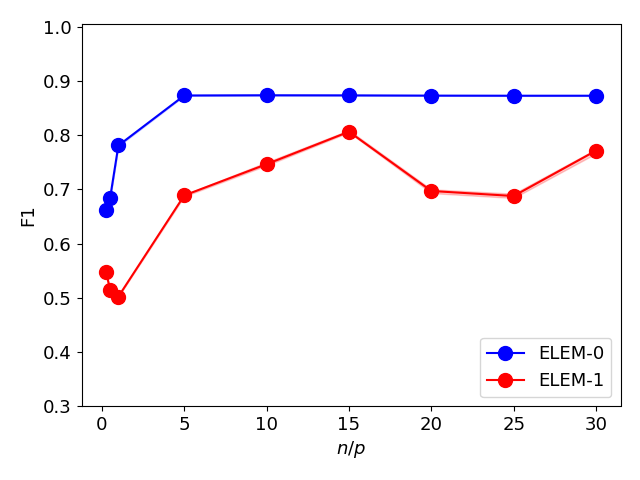}\includegraphics[width=0.36\linewidth]{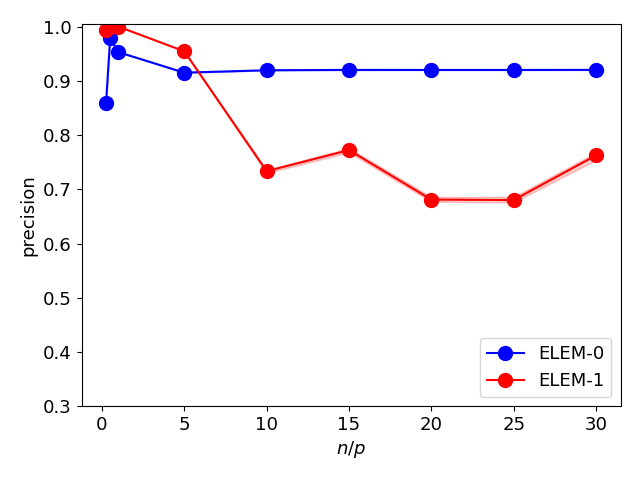}\includegraphics[width=0.36\linewidth]{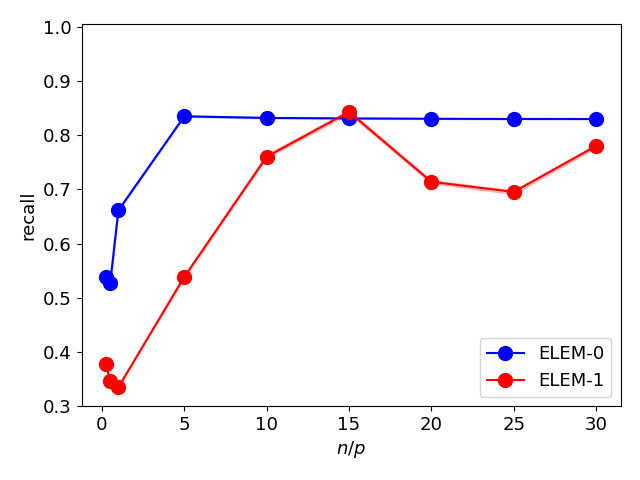}\\
    
    \hspace{-7mm}\includegraphics[width=0.36\linewidth]{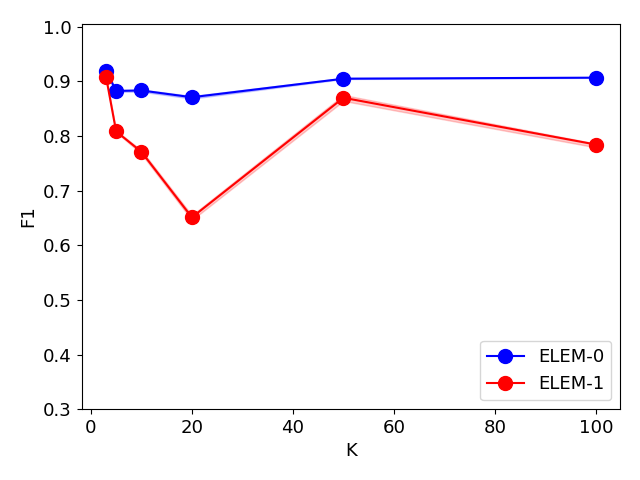}\includegraphics[width=0.36\linewidth]{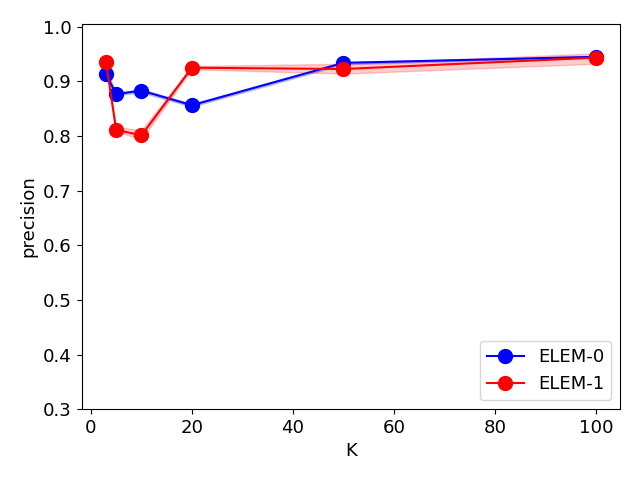}\includegraphics[width=0.36\linewidth]{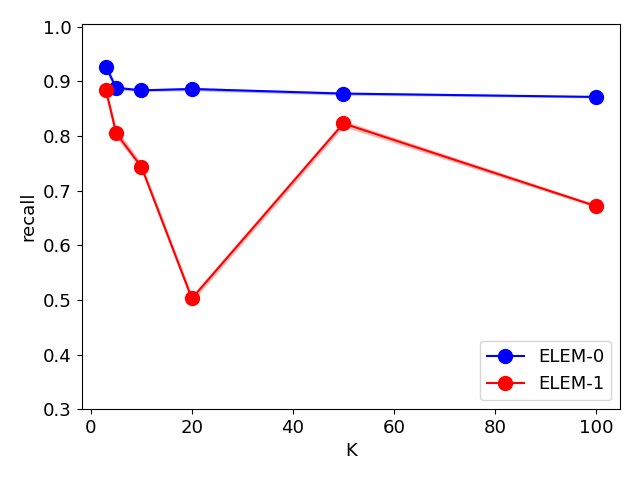}
    \caption{\textbf{Comparative performance of the proposed ELEM-0 (with $\ell_0$-penalty) and ELEM-1 on synthetic gene network benchmarks.} The first row shows performance metrics for varying $n/p$ ratios, while the second row presents performance metrics for different numbers of populations $K$. Solid lines represent the mean F1 scores, precision, and recall, while the shaded regions indicate the range between the minimum and maximum values
    }
    \label{fig:exp1}
\end{figure}

\subsection{Extension to inference over multiple categories} 

In this section, we present experiments on categorical inference, where the goal is to recover both shared and category-specific components of the underlying networks. We evaluate the performance of the categorical ELEM-0 algorithm (introduced in Section~\ref{sec: categorical ELEM-0}) and compare it with the standard ELEM-0 algorithm. We begin by generating a global network for each hypergraph node using the procedure described in Section~\ref{sec: data generation}. To introduce category-specific variation, we perturb a fixed proportion of the off-diagonal entries in the global precision matrix by adding randomly weighted edges. This setup allows us to control the degree of similarity between categories and evaluate the robustness of our method under structured heterogeneity. It reflects biological scenarios such as cell-type–specific effects of a disease or niche-specific variations between primary and recurrent disease. By introducing controlled inter-category differences, we simulate the heterogeneity commonly observed across patient cohorts within spatial or latent regions in real-world transcriptomic data.

To quantify the extent of variation across categories, we introduce the \textit{local edge ratio} $\delta$, which specifies the proportion of local (i.e., category-specific) edges relative to the total number of edges in each network. A small value of $\delta$ indicates that most edges are shared across categories, and only a few are specific to individual networks. In contrast, a large $\delta$ implies that each category-specific network contains a substantial number of local edges, making it more distinct from the shared global structure. For example, $\delta=50\%$ means that half of the edges in each network are shared, and half are local.

In the first experiment, illustrated in Fig~\ref{fig:F1_vs_pertub_strength}, we study the effect of the local edge ratio $\delta$ on the error metrics of the inferred precision matrices. We focus on recovering the networks for two categories. In the plot, dots and triangles denote the error metrics for the individual categories, while the solid line represents their mean. For this experiment, we set $p=250$, $K=5 $ and $n/p=20$. 

\begin{figure}
        \hspace{-9mm}\includegraphics[width=0.37\linewidth]{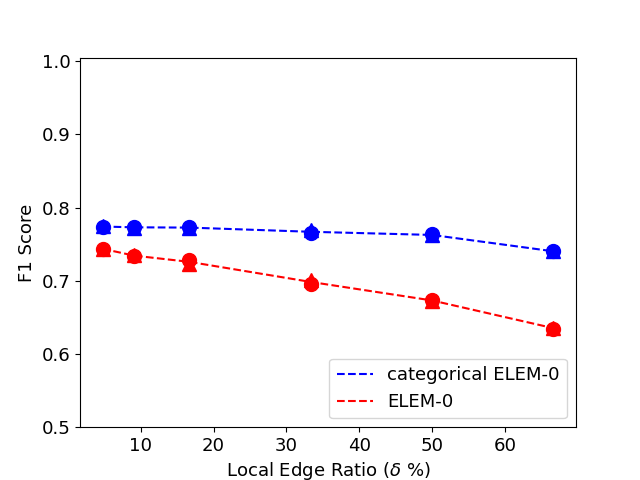}\includegraphics[width=0.37\linewidth]{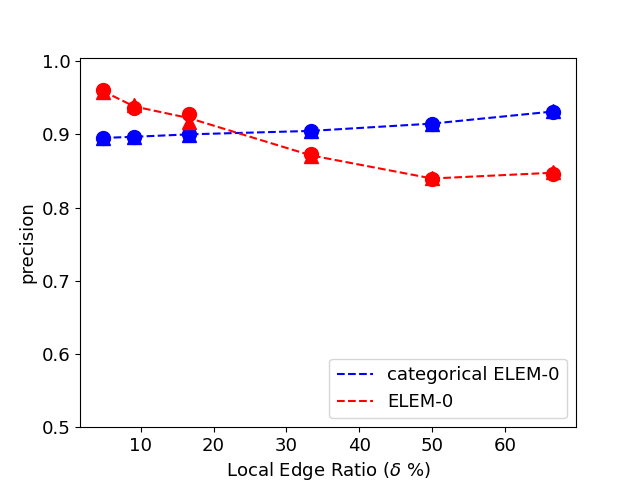}\includegraphics[width=0.37\linewidth]{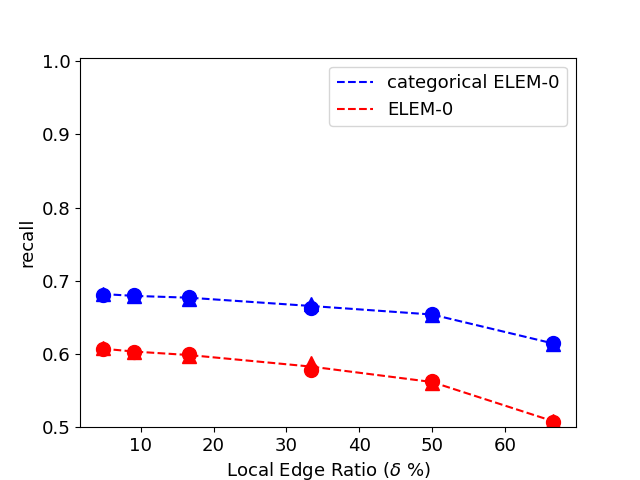}
        \caption{\textbf{Comparison of categorical and standard ELEM-0 across increasing local edge ratios} Performance metrics are shown for both categorical ELEM-0 and standard ELEM-0 across increasing values of $\delta$. Dots and triangles denote the error metrics for the individual categories, while the solid line represents their average. }
        \label{fig:F1_vs_pertub_strength}
\end{figure}

For a small local edge ratio (low $\delta$), standard ELEM-0 performs comparably to categorical ELEM-0, likely because the networks are highly similar and the benefit of sharing information is minimal. However, as dissimilarity increases (higher $\delta$), categorical ELEM-0 consistently outperforms standard ELEM-0. This improvement arises because the categorical algorithm effectively pools information for the shared network structure, providing more robust estimates of common edges even as category-specific differences grow. When the number of local edges increases, independent inference must estimate more parameters from limited data, which can lead to higher variance and reduced accuracy. In contrast, categorical ELEM-0 leverages the shared structure as a statistical regularizer, stabilizing network recovery and mitigating the impact of noise or limited sample size. Moreover, since the local edges are introduced independently of the hypergraph topology, the signal distinguishing categories is not aligned with the population-level structure, making it harder for independent inference to distinguish true unique edges from noise. By explicitly modeling both shared and local components, categorical ELEM-0 disentangles these effects, yielding higher F1 scores as local edge ratio increases. 

We next evaluate the performance of the categorical ELEM-0 across a range of $n/p$ ratios and $K$, using a fixed local edge ratio of $\delta = 50\%$ and $p=250$. Once again, we compare categorical ELEM-0 with standard ELEM-0. Results are presented in Fig~\ref{fig:experiment shared inference}. The top row shows the error metrics across different values of $n/p$. In this setting, we fix the number of populations at $K = 20$. For $n/p \geq 15$, the standard ELEM-0 method consistently recovers networks with an F1 score around 0.7. In contrast, the categorical ELEM-0 initially exhibits lower F1 scores for small $n/p$ values, but its performance steadily improves as $n/p$ increases. Notably, for $n/p \ge 20$, categorical ELEM-0 outperforms the standard approach, reaching an F1 score of approximately $0.8$ when $n/p = 30$.

The bottom row of Fig~\ref{fig:experiment shared inference} reports F1 scores across varying values of $K$ for fixed $n/p=20$. The standard ELEM-0 method achieves an F1 score of around 0.7 for small $K\le 20$, but its performance degrades as $K$ increases. When $K=100$, standard ELEM-0 recovers networks with F1-scores below $0.4$. In contrast, categorical ELEM-0 maintains consistently higher performance, with F1 scores ranging from approximately 0.8 to 0.75. This demonstrates the improved robustness of categorical ELEM-0 for a larger number of populations.

\begin{figure}
    \hspace{-5mm}\includegraphics[width=0.36\linewidth]{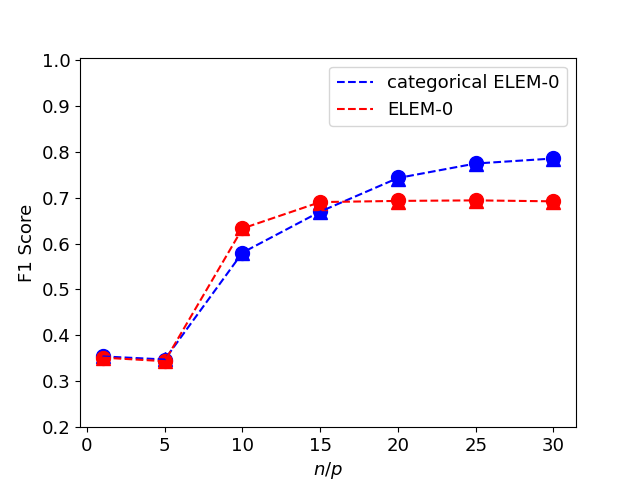}\includegraphics[width=0.36\linewidth]{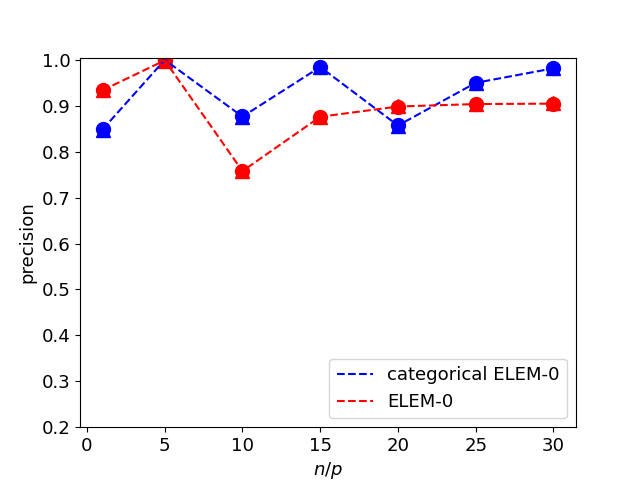}\includegraphics[width=0.36\linewidth]{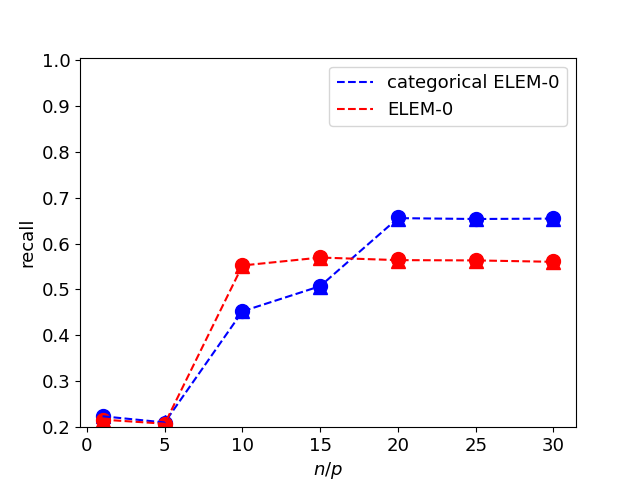}\\
    
    \hspace{-5mm}\includegraphics[width=0.36\linewidth]{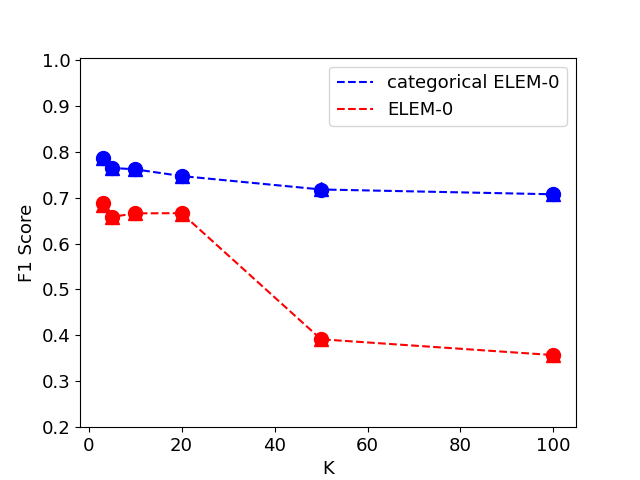}\includegraphics[width=0.36\linewidth]{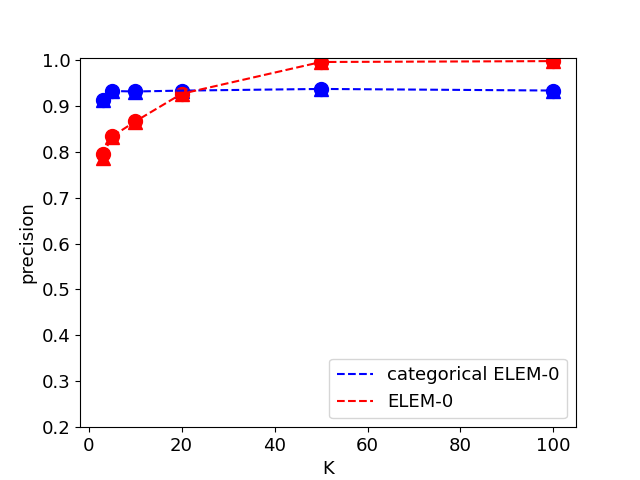}\includegraphics[width=0.36\linewidth]{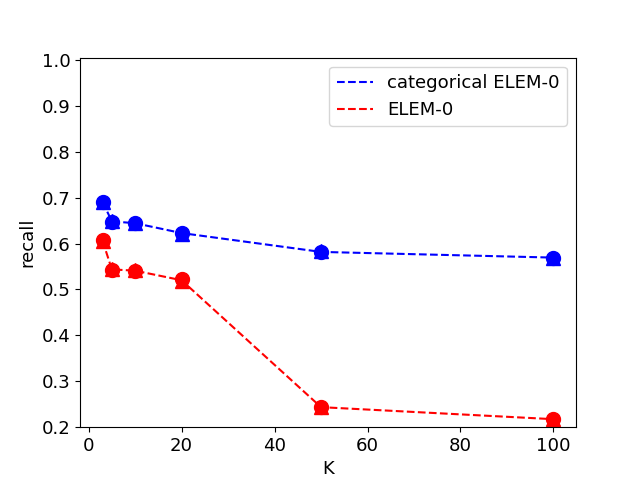}
    \caption{\textbf{Performance comparison of categorical ELEM-0 and standard ELEM-0 at $\boldsymbol{\delta = 50\%}$.}  The first row shows performance metrics for varying $n/p$ ratios, while the second row presents performance metrics for different values of $K$.  Dots and triangles denote the error metrics for the individual categories, while the solid line represents their average. }
    \label{fig:experiment shared inference}
\end{figure}

These results highlight the robustness and versatility of our categorical inference algorithm across diverse data regimes and population structures, underscoring its potential as a powerful tool for investigating complex disease biology and context-specific network rewiring.

\section{Joint inference of gene regulatory networks in glioblastoma tumor populations}

\begin{figure}
    \centering
     \includegraphics[height=7in]{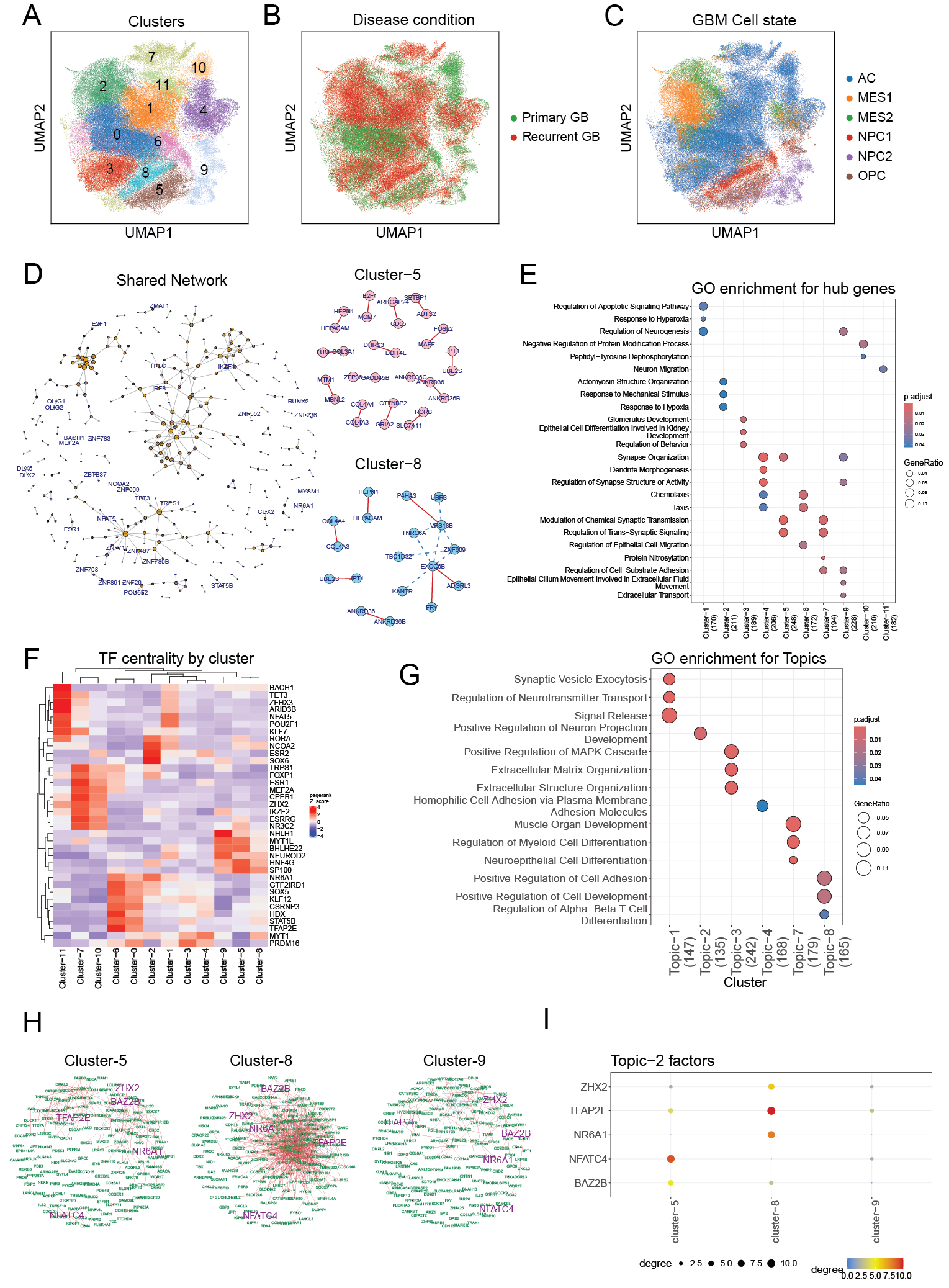}
    \caption{\footnotesize\textbf{Joint network inference reveals shared and population-specific gene regulatory architecture across glioblastoma tumor populations.}(A) UMAP embedding of cancer cell clusters across 24 patients (B) Distribution of primary and recurrent tumor cells across clusters (C) Neftel states of tumor cells (D) Inferred gene regulatory networks: the global network backbone (edges shared across all clusters) is shown alongside population-specific regulatory edges unique to clusters 5 (OPC-like) and 8 (NPC1-like). Red solid lines indicate positive regulatory interactions; blue dashed lines indicate inhibitory interactions. (E) Gene Ontology enrichment analysis for the top 5\% hub genes (composite centrality ranking) from each cluster-specific network, revealing functional specialization of regulatory programs. (F) Top transcription factors (TFs) in each network ranked by Pagerank centrality, highlighting distinct TF combinations central to different tumor subpopulations. (G) Topic modeling (LDA) applied to progenitor-cell clusters identifies eight gene modules (topics) associated with distinct biological functions. (H) The regulatory network for Topic 2, uniquely active in NPC1-like cells, is shown between progenitor clusters. (H) Key TFs in the Topic 2 network ranked by node degree, pinpointing central regulators of neural precursor identity. }
    \label{fig:gbm-sc}
\end{figure}

\begin{figure}
    \centering
    \includegraphics[height=4.5in]{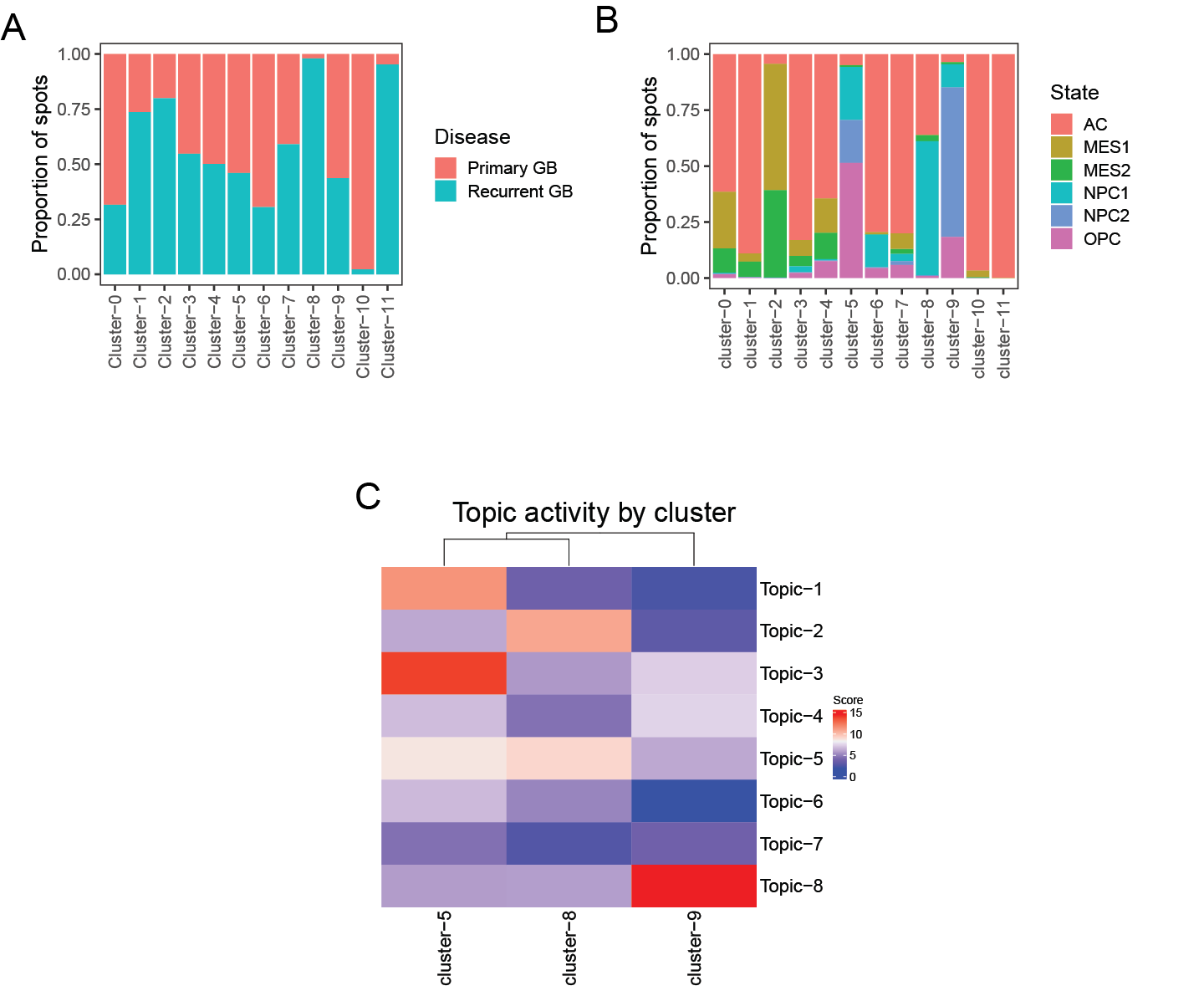}
    \caption{\footnotesize\textbf{Supplemenatary to Figure~\ref{fig:gbm-sc}:} (A) relative distribution of primary and recurrent tumor cells in each Leiden cluster. (B) Distribution of Neftel states by cluster. (C) Enrichment for different TF topic modules in precursor-type clusters. }
    \label{fig:fig1suppl}
\end{figure}

Glioblastoma is a highly aggressive malignancy characterized by pronounced inter- and intra-patient heterogeneity \cite{eisenbarth2023glioblastoma}. While numerous subtyping strategies have been developed---ranging from clustering of gene expression programs \cite{neftel2019integrative} to pathway activity profiling \cite{garofano2021pathway}---recent evidence indicates that glioblastoma cell states are primarily shaped by dynamic rewiring of gene regulatory networks in response to microenvironmental cues, rather than by static genomic alterations \cite{yabo2022cancer}. To systematically interrogate this network plasticity and identify key transcriptional regulators, we applied our network inference algorithm to a comprehensive, previously unpublished single-cell RNA-seq atlas of glioblastoma from the University of Michigan Medical School. The dataset comprises single-cell transcriptomes from 24 patients (13 primary, 11 recurrent tumors), yielding approximately 450,000 high-quality cells. Batch effects were corrected using scVI, and broad cell types were annotated based on canonical marker gene expression. For downstream analysis, we focused exclusively on tumor cell populations, excluding immune and neural cells, resulting in a dataset of 148,019 tumor cells (66,671 primary; 81,348 recurrent). To capture the full extent of tumor heterogeneity, we performed unsupervised clustering using scVI latent embeddings to construct a neighborhood graph, followed by Leiden clustering (resolution = 0.5). After removing clusters with fewer than 150 cells, we identified 12 distinct tumor cell clusters distributed across the gene expression latent space (Fig~\ref{fig:gbm-sc}A). Most clusters exhibited a balanced mix of primary and recurrent disease (Fig \ref{fig:fig1suppl}A), with cluster 10 enriched for primary tumors and clusters 8 and 11 enriched for recurrent disease (Fig~\ref{fig:gbm-sc}B, \ref{fig:fig1suppl}A). To facilitate biological interpretation, we scored cells for Neftel tumor state marker genes and assigned clusters to subtypes accordingly (Fig~\ref{fig:gbm-sc}C, \ref{fig:fig1suppl}B). The majority of tumor cells were classified as AC-like, with MES-like and precursor-like states occupying distinct regions of the embedding. Clusters 5, 8, and 9 were enriched for precursor states, while cluster 2 was predominantly MES-like (Fig~\ref{fig:gbm-sc}C, \ref{fig:fig1suppl}B).

To infer gene regulatory networks, we leveraged the relative proximity of clusters in the scVI latent space to impose similarity constraints, enabling joint inference across related populations. We first identified a global network backbone by extracting edges shared across all clusters (Fig~\ref{fig:gbm-sc}D). This core network comprised 406 nodes and 446 edges, with prominent interactions between known functional partners such as \textit{OLIG1/OLIG2}, \textit{DLX2/DLX5}, and \textit{TUBB2A/TUBB2B}. Larger connected components involving immune response and cell cycle genes were also observed, reflecting conserved regulatory programs across tumor populations. To highlight cluster-specific regulatory rewiring, we visualized the top 15 unique edges in two precursor-enriched clusters (5 and 8). Cluster 5, enriched for OPC-like cells, and cluster 8, enriched for NPC1-like cells, both exhibited strong regulatory interactions involving collagens and cell adhesion molecules (e.g., \textit{HEPACAM}), consistent with roles in axonal and neurite outgrowth \cite{wareham2024collagen}. Notably, ankyrin-repeat domain-containing gene family members, which scaffold multiprotein complexes for cytoskeletal organization, were prominent in both clusters \cite{kurochkina2018phosphorylation}. Cluster 5 displayed regulatory interactions such as \textit{E2F1-MCM7} (cell cycle), \textit{AUTS2-SETBP1}, and \textit{MBNL2-MTM1} (brain development), while cluster 8 featured numerous links among genes involved in intracellular signaling and vascular trafficking, critical for neurodevelopment \cite{wen2013phenotypic, vacca2024exploring}.

To systematically prioritize key regulators, we adopted the approach of Iacono et al. \cite{iacono2019single}, identifying the top 5\% of hub genes in each network using five centrality measures: degree, betweenness, closeness, eigenvector, and pagerank centrality. This composite ranking yielded robust lists of population-specific hub genes, which were subjected to Gene Ontology enrichment analysis (Fig~\ref{fig:gbm-sc}E). MES-like clusters were enriched for hubs involved in hypoxia response, oxygen stress, and ECM organization, while precursor-enriched clusters showed enrichment for synaptic signaling, vascular transport, and neurogenesis. Cluster 11, specific to recurrent tumors, was enriched for hubs regulating neuronal migration, potentially contributing to tumor invasiveness. Other clusters were characterized by hubs involved in neuronal development, function, and cell migration. Transcription factors (TFs) were further ranked by pagerank centrality within each network, and the most variable TFs across clusters were visualized (Fig~\ref{fig:gbm-sc}F). Distinct TF combinations were central to different clusters, consistent with hub gene enrichment results. Notably, cluster 11, which was enriched for neuronal migration processes and recurrent disease, exhibited high centrality for \textit{BACH1}, a TF implicated in glioblastoma treatment resistance and invasiveness by sustaining stem-cell identity \cite{hu2024transcription, wang2024single}. Progenitor-like clusters were driven by TFs regulating neuronal differentiation (\textit{NEUROD2, NHLH1, MYT1L, BHLHE22}) and chromatin remodeling (\textit{SP100}), while MES-like clusters were characterized by hypoxia response TFs (\textit{RORA, ESR2, NCOA2}).

To further dissect regulatory logic among precursor cell populations, we applied Latent Dirichlet Allocation (LDA) topic modeling to the inferred networks, following previous frameworks for gene network analysis \cite{lou2020topicnet, zhang2023inference}. In this approach, each TF's regulome is treated as a document and genes as words, enabling the identification of latent topics corresponding to functional gene modules. The aggregated regulomes of all TFs were used to fit an LDA model (using the \textit{topicmodels} package in R \cite{grun2011topicmodels}), with the optimal number of topics ($k=8$) determined via the \textit{ldatuning} package \cite{nikita2016package}. Genes were assigned to topics based on relative weights, and the sum of edge connectivity within each topic module was used as a proxy for module activity. Enrichment analysis revealed that the eight topics mapped onto distinct biological processes, including ECM organization, neurotransmitter transport, vesicle exocytosis, differentiation, and cell adhesion (Fig~\ref{fig:gbm-sc}G). Notably, Topic 2 was uniquely active in cluster 8, which harbors the NPC-I like cells (Fig \ref{fig:fig1suppl}C). Visualization of the Topic~2 subnetwork across relevant clusters (Fig~\ref{fig:gbm-sc}H) revealed extensive connectivity in Cluster~8, highlighting the centrality of this module in neural precursor cell identity. To identify key regulators of this module, we ranked TFs by their connectivity within the Topic~2 subnetwork. Fig~\ref{fig:gbm-sc}I presents the top TFs by degree, demonstrating that factors such as \textit{ZHX2}, \textit{TFAP2E}, and \textit{NR6A1} are highly connected in Cluster~8. These TFs are known to maintain neural stemness and regulate cell fate specification in precursor cells \cite{wu2009zhx2, hong2014transcription, li2024expanding}, supporting their putative roles as drivers of the observed regulatory program.

In summary, our integrative framework enables efficient, simultaneous inference of large-scale gene regulatory networks across multiple tumor cell populations, capturing both shared and context-specific regulatory architecture. Through a combination of network centrality analysis, functional enrichment, and topic modeling, we systematically identify key transcription factors and gene modules underlying glioblastoma cell state diversity. This approach not only recapitulates known biology but also uncovers novel regulators and interactions that define the molecular identity of distinct tumor subpopulations, providing a foundation for mechanistic studies and therapeutic targeting in glioblastoma.\vspace{2mm}

\section{Gene network rewiring along continuous spatial gradients}

\begin{figure}
    \centering
    \includegraphics[height=6.9in]{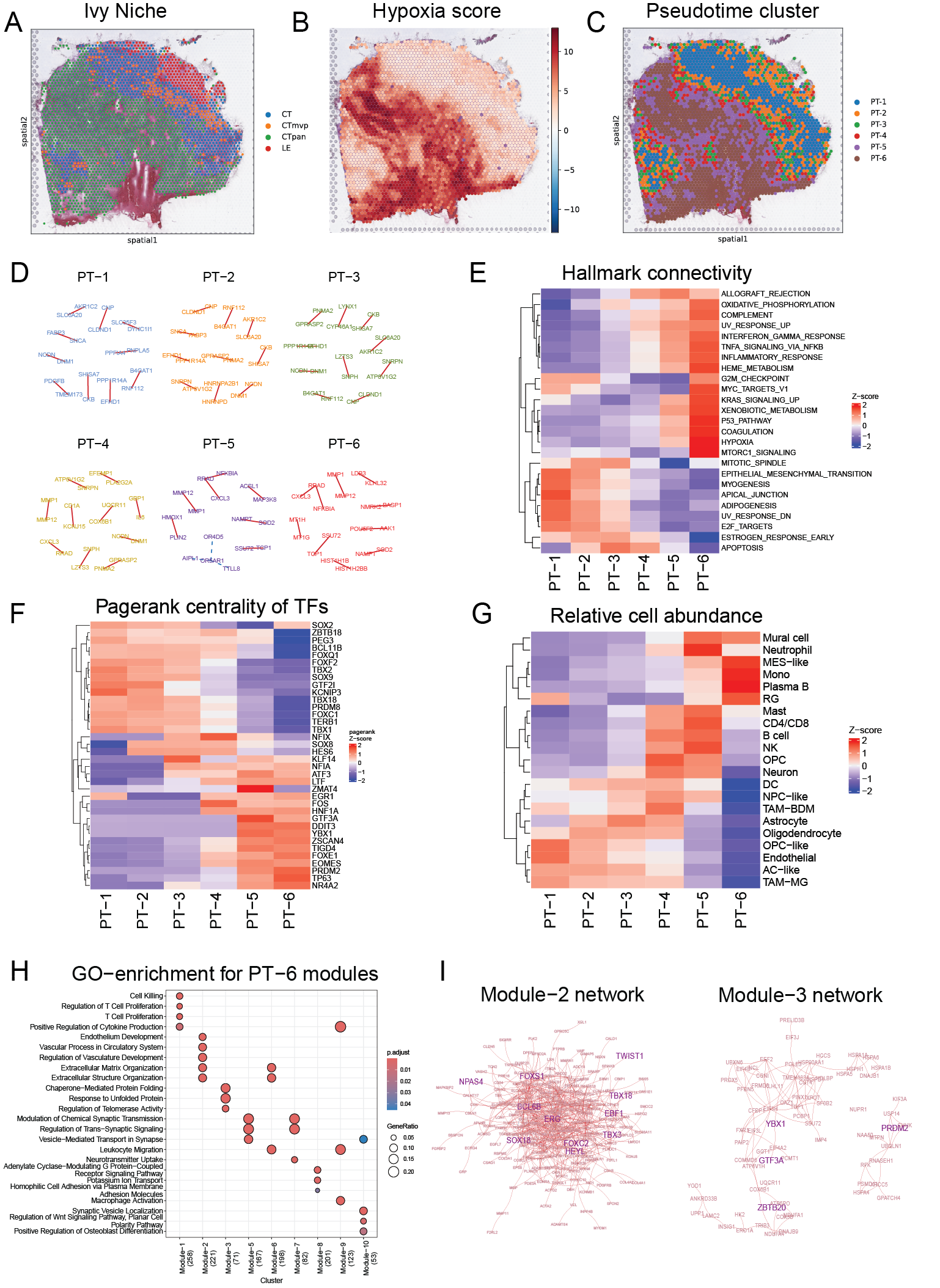}
    \caption{\footnotesize\textbf{Gene regulatory network rewiring along continuous hypoxia gradients in glioblastoma spatial transcriptomics.} (A) Annotation of anatomical niches on a representative Visium tissue slide, delineating perivascular, tumor core, and perinecrotic regions. (B) Spatial distribution of hypoxia response activity, quantified by enrichment scores for the Hallmark Hypoxia gene set in each spot. (C) Stratification of tissue spots into six populations (PT1–PT6) along an increasing hypoxic stress gradient, from perivascular to hypoxic/necrotic regions. (D) Top 10 strongest regulatory edges in each population-specific (PT) network, highlighting progressive rewiring of gene interactions across the hypoxia gradient. (E) Scaled activity scores for MSigDB Hallmark gene sets in each PT network, revealing dynamic shifts in pathway connectivity with increasing hypoxia. (F) Top transcription factors (TFs) in each network ranked by Pagerank centrality, illustrating changes in TF importance and regulatory dominance along the gradient. (G) Estimated spot composition for each PT cluster, based on cell type deconvolution using a single-cell reference atlas. (H) Gene Ontology enrichment for modules identified in the PT6 (most hypoxic) network, demonstrating functional specialization of gene communities. (I) Connectivity of Modules 2 and 3 within the PT6 network, highlighting regulatory interactions related to hypoxia-driven ECM remodeling and vascular development, and stress response activities respectively. }
    \label{fig:hypxoia}
\end{figure}

Hypoxia is a molecular hallmark of glioblastoma, critically contributing to treatment resistance, tumor invasion, and the establishment of an immunosuppressive microenvironment \cite{park2022current}. Recent spatial transcriptomics studies have demonstrated that hypoxic stress not only organizes the global tumor architecture in glioblastoma but also shapes the spatial distribution of both immune and malignant cell states \cite{greenwald2024integrative}. Notably, distinct hypoxia-adapted tumor states---such as MES-Hypoxia, MES-Astrocytic, and Chromatin Regulatory---have been described, representing adaptive rewiring of canonical malignant programs in response to microenvironmental gradients. Elucidating the molecular drivers of these dynamic regulatory changes is essential for identifying vulnerabilities in treatment-resistant tumor populations. To systematically investigate gene regulatory network (GRN) dynamics along a hypoxic gradient, we performed Visium spatial transcriptomics on 16 slides spanning primary and recurrent glioblastoma. After stringent quality control, we retained high-quality spots for downstream analysis. To integrate data across multiple slides and mitigate batch effects, we trained a scVI model and obtained batch-corrected, imputed gene expression counts. Hypoxia signaling activity was quantified in each spot using enrichment scores for the Hallmark Hypoxia signature. Based on these scores, spots were stratified into six populations (PT1–PT6) along the hypoxia gradient, representing increasing hypoxic stress. Spatial visualization revealed that PT1–PT3 were localized near vasculature, with a progressive transition toward the tumor core and perinecrotic regions in PT4–PT6 (Fig\ref{fig:hypxoia}A–C).

We applied our joint network inference algorithm to gene expression data from each hypoxia-defined population, enabling simultaneous estimation of GRNs while leveraging population similarity. Analysis of the top 10 strongest edges in each network revealed persistent interactions in PT1–PT3 involving cell cycle (\textit{CCND1}), metabolic (\textit{FABP3, CKB, AKR1C2}), and neuronal signaling genes (\textit{NCDN, EFHD1}), indicating conserved biological activity in perivascular regions. Notably, \textit{PDGFB} activity and ribonucleoprotein complex interactions were prominent near vasculature. PT4 marked a transition, with emergent connectivity among matrix remodeling and immune signaling genes. By PT5, the strongest edges were unique to this population, featuring activation of stress response genes (\textit{SOD2, HMOX1, TCP1}). In PT6, novel interactions among histone genes suggested chromatin remodeling and epigenetic adaptation to severe hypoxia (Fig~\ref{fig:hypxoia}D). To assess the functional impact of network rewiring, we quantified pathway activity as the sum of edge weights among genes within curated Hallmark gene sets~\cite{liberzon2015molecular} for each population (Fig~\ref{fig:hypxoia}E). This network-centric approach provides a nuanced measure of pathway activation, sensitive to context-specific regulatory interactions \cite{liang2023pathway}. Early populations (PT1–PT3) exhibited strong connectivity within EMT and vascular programs, consistent with perivascular stemness and invasiveness \cite{iwadate2016epithelial, uribe2022adapt}. PT4 represented an intermediate state, with attenuation of vascular programs and activation of hypoxia-associated pathways. PT5–PT6 displayed pronounced activation of hypoxia, p53, and mTORC1 signaling, as well as metabolic and immune modules \cite{monteiro2017role}. Pagerank centrality analysis of transcription factors (TFs) revealed a sharp regulatory shift along the gradient (Figure~\ref{fig:hypxoia}F): perivascular populations were dominated by stemness and developmental TFs (\textit{SOX2, SOX9, PRDM8, T-box family}), while hypoxic populations were enriched for stress response (\textit{ATF3, YBX1, DDIT3, NR4A2, TP63}) and EMT-associated FOX-family factors \cite{katoh2013cancer}.

Given the compositional heterogeneity inherent to spatial transcriptomics spots, we performed cell type deconvolution using a reference single-cell atlas \cite{ruiz2022harmonized} to estimate the abundance of tumor and immune cell states within each spot. This analysis revealed pronounced compositional shifts along the hypoxia gradient (Fig~\ref{fig:hypxoia}G): perivascular regions were enriched for AC- and OPC-like tumor states, while MES-like states predominated in hypoxic, necrotic regions, consistent with recent findings \cite{greenwald2024integrative, haley2024hypoxia, neftel2019integrative}. The enrichment of immune signaling pathways in later populations may partially reflect altered spot composition. To further resolve the functional architecture of hypoxic networks, we applied the Leiden community detection algorithm to the PT6 network. Leiden clustering partitions large-scale gene regulatory networks into densely connected modules, enabling the identification of functionally coherent gene communities even in the presence of spot-level heterogeneity. Importantly, because spatial transcriptomic spots often contain mixtures of tumor and non-tumor cells, focusing on network connectivity rather than expression alone allows Leiden clustering to reveal tumor-intrinsic regulatory modules that may otherwise be masked by compositional variation. We identified ten modules (minimum 50 genes each), which were subjected to Gene Ontology enrichment analysis. Modules were specialized for a range of biological processes including migration and chemotaxis, immune signaling, neuronal function, ECM organization and angiogenesis, and cellular stress response (Fig~\ref{fig:hypxoia}H). Module-2 was characterized by extensive connectivity among transcription factors implicated in hypoxia-driven ECM remodeling and angiogenesis, while Module-3 was centered on \textit{YBX1}, a critical regulator of survival and radioresistance in glioma stem cells \cite{zheng2023multiomics} (Fig~\ref{fig:hypxoia}I). This modular decomposition not only enhances the interpretability of large-scale regulatory networks but also facilitates the identification of candidate master regulators within specific biological contexts, providing a focused framework for mechanistic investigation and potential therapeutic targeting.

Our approach enables high-resolution mapping of gene network rewiring along physiological gradients, revealing both conserved and niche-specific regulatory modules. By integrating community detection and centrality analyses, we identify candidate transcriptional regulators and gene modules that may underlie cell fate transitions, therapy resistance, and spatial adaptation in glioblastoma. This framework is readily extensible to single-cell resolution spatial transcriptomics datasets, offering a powerful tool for dissecting the molecular logic of tumor adaptation to varying microenvironments.\vspace{2mm}

\section{Niche-Specific Gene Network Rewiring Between Primary and Recurrent Disease in Glioblastoma}

To investigate the regulatory mechanisms driving spatially localized adaptation to therapy and drivers of recurrence in glioblastoma, we developed and benchmarked a categorical extension of our joint network inference algorithm that let us identify niche-specific differential interactions in recurrent tumor slides as shown in Figure~\ref{fig:cat-schema}. ST slides were annotated using the IVY-GAP marker genes, and regional spots were aggregated across all primary and recurrent samples for inference. To address batch effects and technical dropout, we applied scVI-based integration and counterfactual inference to obtain batch-corrected, imputed gene expression counts across all slides. Principal component analysis of the top 3,000 variable genes revealed that tissue niche exerts a strong influence on the transcriptional profile of spots, with continuous variation in expression patterns as we navigate across niches (Fig~\ref{fig:cat}A). Spot-deconvolution analysis (shown in Figure~\ref{fig:hypxoia}G) shows specific localization of tumor states to specific spatial microenvironments. To assess whether our inferred networks capture this spatial heterogeneity in activity of tumor states, we computed activity scores for the Neftel modules across the four niches in the primary GBM network. Connectivity within the corresponding gene modules for each tumor state vary significantly across niches (Figure~\ref{fig:Fig-3 suppl}A), with specific localization of the MES2-like state in the PAN regions and NPC2-like state to the tumor leading edge(Figures \ref{fig:cat}B, \ref{fig:Fig-3 suppl}A) as reported in recent publications \cite{ravi2022spatially,greenwald2024integrative}.

\begin{figure}[h!]
    \centering
    \includegraphics[width=\textwidth]{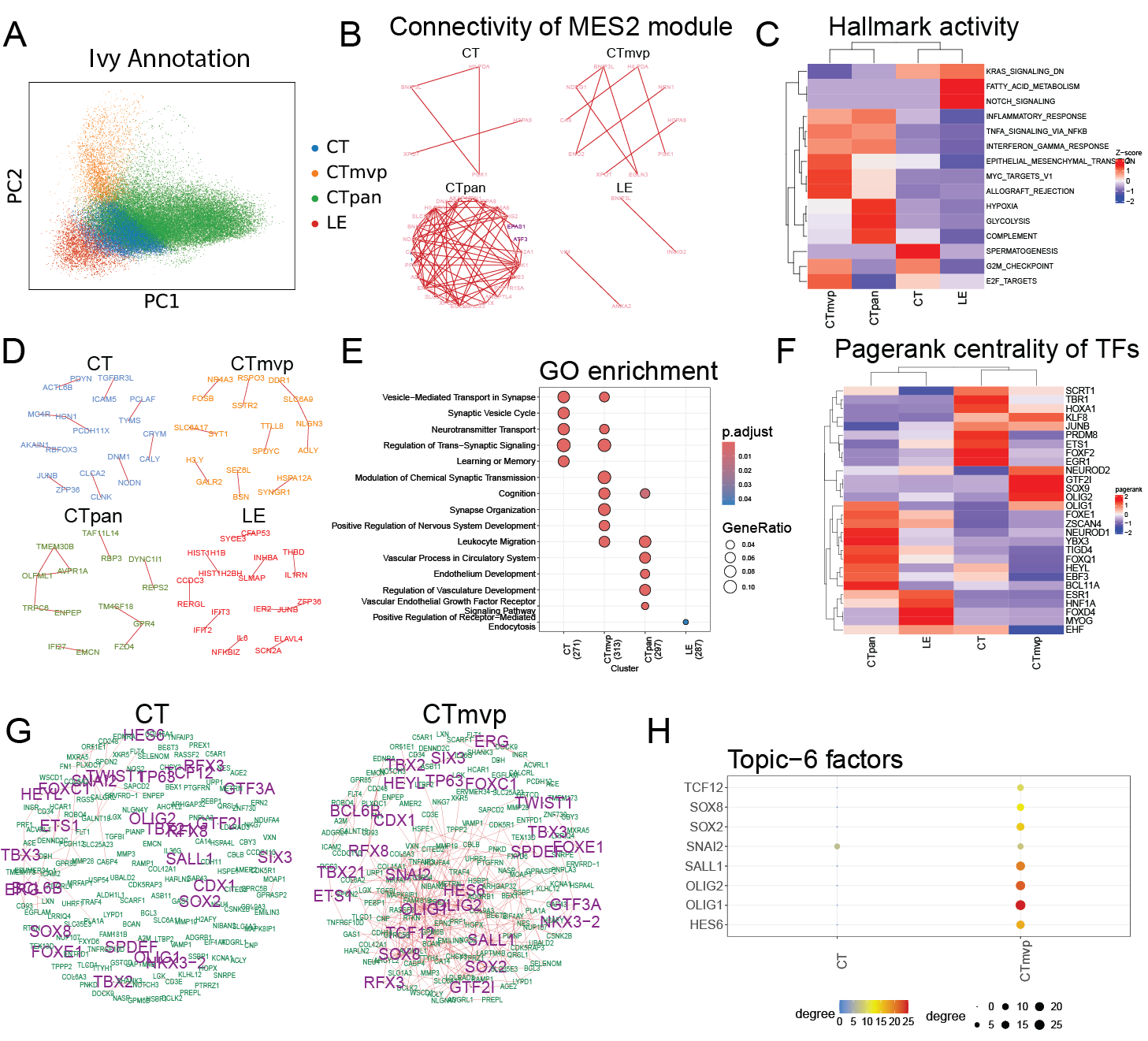}
    \caption{\textbf{Niche-specific gene regulatory network rewiring in recurrent glioblastoma.} (A) Principal component embedding of batch-corrected gene expression profiles reveals pronounced variation between anatomically defined tumor niches. (B) Connectivity analysis of the Neftel MES2 gene set in niche-specific regulatory graphs demonstrates significant enrichment of mesenchymal-like regulatory activity within the cellular tumor pan (CTpan) niche. (C) Variation in Hallmark pathway activity across primary tumor networks, highlighting distinct functional programs in each niche. (D) Top 10 differential regulatory edges unique to each niche in recurrent tumor networks, illustrating region- and recurrence-specific network rewiring. (E) Enrichment analysis of hub genes in differential networks identifies niche-adaptive regulators gained in recurrence. (F) Key TFs ranked by Pagerank centrality in differential networks, pinpointing master regulators underlying recurrent adaptation in each niche. (G) Topic modeling of perivascular (MVP) networks identifies a gene module associated with stemness, uniquely enriched in the MVP niche at recurrence. (H) Top TFs by node degree in the stemness-associated topic network, highlighting central regulators of therapy-resistant, stem-like cell states.}
    \label{fig:cat}
\end{figure}

\begin{figure}[h!]
    \centering
    \includegraphics[width=0.85\textwidth]{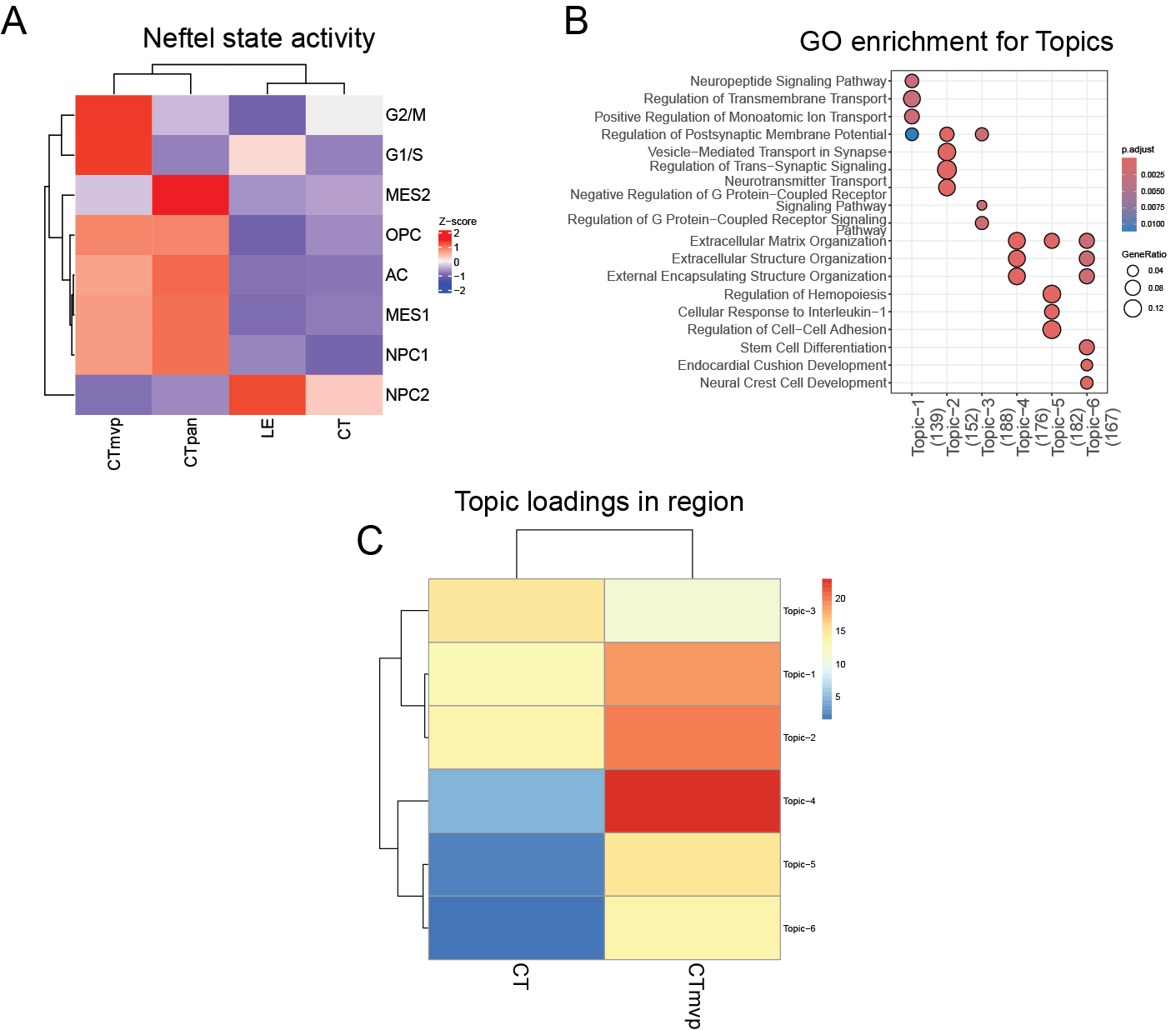}
    \caption{\textbf{Supplementary to Figure~\ref{fig:cat}: }(A) Relative enrichment scores for Neftel states in niche-specific networks in primary GBM. (B) GO enrichment for TF topics in peri-vascular (CTmvp + CT) niches. (C) Activity scores of topic modules in CTmvp vs CT networks.}
    \label{fig:Fig-3 suppl}
\end{figure}

To further evaluate the biological variability in our inferred networks, we quantified Hallmark gene set connectivity and visualized activity of the top five pathways per region (Figure~\ref{fig:cat}C). The perivascular niche (MVP) displayed strong connectivity for E2F targets and MYC signaling, pathways essential for glioma stem cell maintenance and survival via transcriptional and chromatin remodeling \cite{zhang2019chromatin, garnier2019glioblastoma}. The cellular tumor (CT) region showed robust cell cycle and EMT activity, indicative of highly proliferative and migratory cells \cite{jung2021tumor}. PAN regions were enriched for hypoxia and glycolysis, reflecting metabolic adaptation, while the leading edge (LE) was characterized by high Notch signaling, a pathway implicated in GSC survival, invasion, and therapy resistance \cite{wang2024mechanism}.

To identify regulatory interactions associated with recurrence, we focused on edges gained in the recurrent condition relative to primary, stratified by region (Figure~\ref{fig:cat}D). In the CT region, new interactions involved neuronal signal transduction and cell cycle regulators (e.g., \textit{PCLAF, TYMS}), while the MVP region gained edges related to cell adhesion, ECM remodeling, vascular development, and neurotransmitter release. PAN regions exhibited rewiring of genes involved in endothelial development, suggesting enhanced angiogenesis, and the LE in recurrent tumors showed increased connectivity among interleukin and interferon-response genes, indicating a more inflamed microenvironment. Enrichment analysis of hub genes in the differential networks (Figure~\ref{fig:cat}E) revealed that CT and MVP regions gained regulators of synaptic vesicle transport and neuronal function, with MVP uniquely enriched for neurodevelopmental pathways, potentially reflecting increased stemness at recurrence. PAN region hubs were associated with vascular growth, while LE hubs modulated specific signaling pathways. Pagerank centrality analysis of transcription factors highlighted region-specific master regulators (Figure~\ref{fig:cat}F): CT was dominated by developmental TFs (\textit{TBR1}, \textit{SCRT1}, \textit{HOXA1}, \textit{KLF8}, \textit{PRDM8}), MVP by stemness regulators (\textit{SOX9}, \textit{OLIG1/2}, \textit{GTF2I}), LE by \textit{EHF}, \textit{HNF1A}, and \textit{ESR1} (all linked to proliferation and poor prognosis \cite{ren2021identification, luo2024transcription}), and PAN by \textit{NEUROD1}, \textit{EBF3}, \textit{ZSCAN4}, \textit{BCL11A}, and \textit{FOXQ1}, which regulate neuronal differentiation, genomic stability, and EMT.

Given the similar enrichment profiles for genes in MVP and CT networks and the spatial proximity of these niches, we applied topic modeling to the differential graphs, identifying six gene modules (Figure \ref{fig:Fig-3 suppl}B). Topics related to neuronal signaling, cell adhesion, and ECM remodeling were differentially activated, with ECM-related modules significantly more active in the vascular niche. A stemness-associated topic (Topic-6) was also enriched in MVP (Fig~\ref{fig:Fig-3 suppl}C), suggesting a link to glioma stem cell programs \cite{brooks2017vascular}. The combined enrichment for ECM organization and stemness-related processes in the Topic-6 network suggests this module is specifically associated with glioma stem cells \cite{brooks2017vascular, wang2021tumor}. Fig~\ref{fig:cat}G shows the components of this topic and its significantly higher connectivity in the MVP region relative to the core tumor areas. Fig~\ref{fig:cat}H highlights the key TFs involved in this module and their higher node degree in the MVP-specific network, implicating these factors as potential drivers of increased stemness and invasiveness in treatment-resistant disease in this tumor niche. These findings are consistent with recent reports that the perivascular niche serves as a reservoir for therapy-resistant GSCs and is a major driver of tumor recurrence and adaptation \cite{wang2021tumor, ghaffari2021computational}.

The ability to resolve region-specific regulatory rewiring provides a powerful framework for dissecting the molecular mechanisms underlying tumor adaptation, progression, and treatment resistance. These insights not only deepen our understanding of the spatial biology of glioblastoma---but also inform the rational design of targeted interventions aimed at disrupting key adaptive programs within the most aggressive and treatment-refractory tumor compartments. \vspace{2mm}

\section{Discussion}

We introduce an $\ell_0$-penalized joint network inference algorithm for reconstructing gene regulatory networks (GRNs) from single-cell and spatial transcriptomics data. By directly enforcing network sparsity, our method avoids the indiscriminate edge shrinkage typical of $\ell_1$-based approaches, yielding more accurate and interpretable regulatory interactions \cite{srep20533, pmc8687351}. The framework is broadly applicable to populations structured as tree-structured hypergraphs, enabling the study of GRN dynamics along developmental, evolutionary, and physiological axes \cite{sciencedirect2022, srep20533}.

Applied to glioblastoma single-cell data, our unsupervised approach resolved distinct tumor subpopulations, including a recurrent, migratory population with elevated \textit{BACH1} activity-a key driver of invasiveness and therapy resistance \cite{science2024, pmc10519590, pmc8687351, pubmed34923423, nature2023, srep39743}. Topic modeling further delineated functional modules active in neural progenitor cells, revealing core regulators of population-specific phenotypes \cite{pmc10519590, science2024}. Spatial transcriptomics analysis captured continuous network rewiring along hypoxic gradients, identifying regulators and modules involved in angiogenesis and adaptation \cite{science2024, nature2023}. Our shared inference framework enables joint reconstruction of regional networks across disease states, facilitating the study of niche-specific network rewiring in recurrent disease. By stratifying tumors into physiological regions and inferring region-specific differential networks, we pinpoint hub genes and transcription factors uniquely rewired during recurrence within specific microenvironments. This includes the activation of stemness and immune evasion programs in perivascular and hypoxic niches---microenvironments increasingly recognized as reservoirs for therapy-resistant glioma stem cells and drivers of tumor progression \cite{frontiers2022, pmc5958355, pmc6357107, pmc3677798, nature2023}. Further dissection of these networks highlights master regulators of gene modules governing stem cell maintenance and therapy resistance in the perivascular niche. These results underscore the value of region- and condition-specific network inference for unraveling the molecular mechanisms of tumor adaptation and treatment resistance \cite{nature2023, pmc4831073, pmc5958355, pmc3677798, pmc10034917}.

Despite these advances, our work has two principal limitations. First, our reliance on imputed gene expression counts as model input-necessitated by the sparsity and limited sensitivity of current sequencing platforms-introduces a dependency on the performance of the imputation method, particularly scVI. The fidelity of the inferred networks is thus contingent on how well scVI captures the underlying latent structure, especially in large, atlas-scale datasets spanning multiple batches. Transitioning from Gaussian graphical models to frameworks that directly model count data, such as generalized linear models (GLMs) with appropriate penalization, may help address this limitation. Recent efforts leveraging GLMs for gene network inference \cite{chatrabgoun2025covariate} are promising in this regard.

The second limitation concerns the need for parameter tuning, specifically the selection of backward mapping thresholds when inferring networks across multiple populations. In large-scale applications, such as our GBM single-cell case study, we found that a single global threshold was insufficient and required cluster-specific adjustments. Although, in theory, a universal threshold with proper scaling should suffice, the high degree of correlation and non-independence among single-cell and spatial transcriptomics data complicates this process. This challenge appears to be intrinsic to the data structure, and alternative solutions remain elusive.

Looking ahead, we envision several avenues for extending this work. Integrating orthogonal epigenomic data, such as single-cell ATAC-seq from multiome experiments, could provide informative priors for network inference, thereby enhancing the recovery of biologically meaningful interactions \cite{baur2020data,Miraldi2019,kamimoto2023dissecting, bravo2023scenic+,Kim2023}. Such integration would enable a more comprehensive and mechanistically grounded analysis of regulatory networks. Additionally, we aim to explore the feasibility of in silico perturbation experiments using the inferred networks, with validation against CRISPR screens~\cite{kamimoto2023dissecting, Roohani2024,Gavriilidis2024}. This approach could substantially increase the translational impact of our methods, facilitating the identification of highly context-specific therapeutic targets for tumor cell states.

\section*{Acknowledgements}
S.F and A.B are partially supported by the NSF CAREER Award CCF-2337776, the NSF Award DMS-2152776, and the ONR Award N00014-22-1-2127. V.R and A.R are supported by the NCI R37CA214955-01A1 and the NSF Award DMS-2152776.

\section*{Competing interests}
A.R. serves as a member of Voxel Analytics, LLC, and serves as a consultant to Tempus, Telperian. He also serves as faculty advisor to TCS Ltd, and affiliate investigator at the Fred Hutch Cancer Center, and is a Satish Dhawan Visiting Professor at the Indian Institute of Science, Bangalore, India.

\bibliography{mybib.bib}    
\nolinenumbers

\end{document}